\documentclass[11pt]{article}

\usepackage[a4paper, margin=1in]{geometry}
\usepackage{amsmath,amssymb,amsthm,amsfonts,cite,xcolor,xspace,graphicx,xifthen,bm}
\usepackage{thmtools,thm-restate}
\usepackage{appendix}
\usepackage{enumitem}
\usepackage{bold-extra}
\usepackage[hang]{footmisc}
\usepackage{subcaption}
\setlength{\footnotemargin}{-0.8em}
\setitemize{noitemsep,topsep=3pt,parsep=3pt,partopsep=3pt}

\definecolor{darkgreen}{rgb}{0,0.5,0}
\definecolor{darkblue}{rgb}{0,0,0.8}
\usepackage{hyperref}
\hypersetup{
    unicode=false,          % non-Latin characters in Acrobat’s bookmarks
    colorlinks=true,        % false: boxed links; true: colored links
    linkcolor=darkblue,          % color of internal links (change box color with linkbordercolor)
    citecolor=darkgreen,        % color of links to bibliography
    filecolor=magenta,      % color of file links
    urlcolor=cyan,          % color of external links
    hypertexnames=false
}

\usepackage[nameinlink,capitalize]{cleveref}

\newtheorem{theorem}{Theorem}[section]
\newtheorem{lemma}[theorem]{Lemma}
\newtheorem{corollary}[theorem]{Corollary}
\newtheorem{conjecture}[theorem]{Conjecture}
\newtheorem{definition}[theorem]{Definition}
\theoremstyle{remark}
\newtheorem*{remark}{Remark}

\newcommand{\LOCAL}{\ensuremath{\mathsf{LOCAL}}\xspace}

\newcommand{\calE}{\ensuremath{\mathcal{E}}}

\newcommand{\poly}{\operatorname{\text{{\rm poly}}}}

\newcommand{\logstar}[1]{\log^{*} #1}

\DeclareMathOperator{\polylog}{\poly\log}

\newcommand{\R}{\mathbb{R}}

\newcommand{\pr}{\operatorname{Pr}}
\newcommand{\rep}{\ensuremath{S_{\operatorname{rep}}}}
\newcommand{\nrep}{\ensuremath{T}}

\newcommand{\Prob}[1]{\pr\left[#1\right]}

% Complexity Classes
\newcommand{\complexityclass}[2][]{\ensuremath{\mathsf{#2}\ifthenelse{\isempty{#1}}{}{(#1)}}}

\newcommand{\pslocal}{\complexityclass{P}\text{-}\complexityclass{SLOCAL}\xspace}

\newcommand{\lovasz}{Lov\'{a}sz\xspace}

\newcommand{\incurved}{incurved}

\newcommand{\gdeg}{\ensuremath{d}}
\newcommand{\badev}{\ensuremath{\mathcal E}}
\newcommand{\inc}{\ensuremath{\operatorname{Inc}}}

\usepackage[most]{tcolorbox}
\newenvironment{quotebox}{
\begin{tcolorbox}[width=0.9\textwidth,
                  enhanced,
                  frame hidden,
                  interior hidden,
                  boxsep=0pt,
                  left=10pt,
                  right=10pt,
                  top=4pt,
                  boxrule=1pt,
                  arc=0pt,
                  colback=white,
                  colframe=black
                  ]%%
\itshape
}{
\end{tcolorbox}
}

\usepackage{framed}
\usepackage[framemethod=tikz]{mdframed}

\begin{document}

\begin{flushleft}

\vspace*{0.8cm}
{\huge\bf A Sharp Threshold Phenomenon for the Distributed Complexity of the Lov\'asz Local Lemma}
\vspace{1.0cm}
\end{flushleft}

\newcommand{\auth}[3]{\textbf{#1}$\,\,\,\cdot\,\,\,$#2$\,\,\,\cdot\,\,\,$#3\par\medskip}

\auth{Sebastian Brandt}{ETH Z\"{u}rich}{brandts@ethz.ch}

\auth{Yannic Maus\footnote{Supported by the European Union’s Horizon 2020 Research And Innovation Programme
under grant agreement no. 755839.}}{Technion}{yannic.maus@cs.technion.ac.il}

\auth{Jara Uitto}{ETH Z\"{u}rich and University of Freiburg}{jara.uitto@inf.ethz.ch}

\vspace{1cm}

\begin{abstract}
The \emph{Lov\'{a}sz Local Lemma (LLL)} says that, given a set of bad events that depend on the values of some random variables and where each event happens with probability at most $p$ and depends on at most $d$ other events, there is an assignment of the variables that avoids all bad events if the \emph{LLL criterion} $ep(d+1)<1$ is satisfied.  Nowadays, in the area of distributed graph algorithms it has also become a powerful framework for developing---mostly randomized---algorithms.
	A classic result by Moser and Tardos yields an $O(\log^2 n)$ algorithm for the distributed  Lov\'{a}sz Local Lemma [JACM'10] if $ep(d + 1) < 1$ is satisfied.
	Given a \emph{stronger} criterion, i.e., demanding a smaller error probability, it is conceivable that we can find better algorithms.
	Indeed, for example Chung, Pettie and Su [PODC'14] gave an $O(\log_{epd^2} n)$ algorithm under the $epd^2 < 1$ criterion.
	Going further, Ghaffari, Harris and Kuhn introduced an $2^{O(\sqrt{\log \log n})}$ time algorithm given $d^8 p = O(1)$ [FOCS'18].
	On the negative side, Brandt et al.\ and Chang et al.\ showed that we cannot go below $\Omega(\log \log n)$ (randomized) [STOC'16] and $\Omega(\log n)$ (deterministic) [FOCS'16], respectively, under the criterion $p\leq 2^{-d} $. 
	Furthermore, there is a lower bound of $\Omega(\log^* n)$ that holds for \emph{any} criterion.
	
	In this paper, we study the dependency of the distributed complexity of the LLL problem on the chosen LLL criterion.
	We show that for the fundamental case of each random variable of the considered LLL instance being associated with an \emph{edge} of the input graph, that is, each random variable influences at most two events, a sharp threshold phenomenon occurs at $p = 2^{-d}$: we provide a simple deterministic (!) algorithm that matches the $\Omega(\log^* n)$ lower bound in bounded degree graphs, if $p < 2^{-d}$, whereas for $p \geq 2^{-d}$, the $\Omega(\log \log n)$ randomized and the $\Omega(\log n)$ deterministic lower bounds hold.
	
	In many applications variables affect more than two events; our main contribution is to extend our algorithm to the case where random variables influence at most three different bad events. 
	We show that, surprisingly, the sharp threshold occurs at the exact same spot, providing evidence for our conjecture that this phenomenon always occurs at $p = 2^{-d}$, independent of the number $r$ of events that are affected by a variable.
	Almost all steps of the proof framework we provide for the case $r=3$ extend directly to the case of arbitrary $r$; consequently, our approach serves as a step towards characterizing the complexity of the LLL under different exponential criteria.
\end{abstract}
\setcounter{page}{0}
\thispagestyle{empty}
\newpage

\section{Introduction}
The probabilistic method is a standard tool for proving the existence of combinatorial objects satisfying a set of properties $\mathcal{P}$. 
It states that if the probability that a randomly chosen object---where objects are chosen from an appropriate class of objects---satisfies all properties in $\mathcal{P}$ is larger than zero then there exists an object that satisfies all properties in $\mathcal{P}$. 
One can characterize the properties in $\mathcal{P}$ through a set of ``bad events'' $\mathcal{E}_1, \ldots, \mathcal{E}_n$ over some random variables.
Then, showing the existence of an object is the same as showing that the probability that no bad event happens is strictly positive, i.e., there is an assignment of the variables such that none of the bad events occur.
Suppose that we can bound the probability $\Prob{\calE_i} < 1/n$ for each event $\calE_i$.
Then, we can use the well-known \emph{union bound} and obtain that $\Prob{\cap\bar{\calE}_i} > 0$.
The downside of the union bound is that it is \emph{global} in the sense that the bound we require gets more demanding with the number of events.
The celebrated \lovasz Local Lemma (LLL)~\cite{Erdoes1974} can be seen as a \emph{local} version of the union bound.
Let $d$ be the maximum number of other events that any event $\mathcal{E}_i$ depends on and let $p$ be an upper bound on the probability that event $\mathcal{E}_i$ occurs.\footnote{In the \emph{symmetric} version of LLL, one assumes that $p$ is the same for all events $\mathcal{E}_i$.}
Then, the lemma states that if the local $LLL$ \emph{criterion} $ep(d + 1) < 1$ is satisfied, then the probability of avoiding all bad events is strictly positive and due to the probabilistic method there is also an assignment to the variables avoiding all bad events.
Notice that unlike in the union-bound approach, the criterion does not depend on the number $n$ of bad events.

In its original form, the proof of the lemma is existential and does not provide an efficient algorithm for finding an assignment of the variables that avoids the bad events.
For 15 years, finding such a method eluded the research community.
After the first constructive proof and an algorithm for the lemma by Beck~\cite{Beck1991}, a lot of work was dedicated to improving the algorithms for LLL~\cite{Alon1991, Molloy1998, Czumaj2000, Srinivasan2008, Pach2009}.
In 2010, Moser and Tardos~\cite{MoserTardos10} provided a beautiful solution to the problem using a simple algorithm that iteratively (re-)samples all the random variables associated with some bad event that occurred.
Their work was particularly good news for the distributed community; the approach is easy to parallelize.

\paragraph{Distributed LLL} 
Let $V = \{\mathcal{E}_1, \ldots, \mathcal{E}_n\}$ be a finite set of (bad) events that depend on a set of random variables with a discrete finite range. 
The \emph{dependency graph} of an LLL instance is a graph where $V$ is the node set and two nodes are connected by an edge if the corresponding events depend on a common variable.
In the \emph{distributed LLL}, the dependency graph corresponds to a communication network and the model of computing is  the standard distributed message passing model on graphs,  the \LOCAL model:
The nodes communicate in synchronous rounds and in every round, each node can send one (unbounded size) message to each of its neighbors and perform (unbounded) local computations.
In the end of the computation, each node has to know the assignment to the variables that influence its own event and agree on the assignment with its neighbors (that share the variables).
A distributed LLL instance is solved if the nodes have jointly computed an assignment to the variables that avoids all bad events.
The complexity measure is the number of rounds.

\paragraph{Criteria vs.\ Time.}
% In this paper, we study the effect of two parameters on the complexity of solving the distributed LLL problem.
It is natural to assume that if we strengthen the LLL criterion $ep(d + 1) < 1$, i.e., the bad events are less likely to happen, we can find better algorithms.
From previous work we know, that if $p \geq 2^{-\gdeg}$, one cannot hope to obtain a better runtime than $\Omega(\log \log n)$~\cite{LLL_lowerbound} randomized and $\Omega(\log n)$ deterministic~\cite{Chang2016a}.
It is known that $\Omega(\log^* n)$ cannot be beaten under \emph{any} LLL criterion that is a function of $d$~\cite{SuLLL2017}.
Central questions in the field are how close can we get to these bounds and under which criteria.

\paragraph{Our Contributions.}
Our first contribution is a simple and natural deterministic algorithm that solves the LLL problem in time $O(d + \log^* n)$ under the exponential criterion $p<2^{-\gdeg}$. 
Our algorithm works under the assumption that each random variable affects at most two bad events.
For constant degree graphs, our algorithm is optimal in two ways. 
First, it matches the  $\Omega(\log^* n)$ lower bound and second, the $O(\log^* n)$ runtime which we obtain is unachievable for any criterion that is weaker than $p < 2^{-\gdeg}$.
In particular, crossing the threshold $p = 2^{-\gdeg}$ introduces a fundamental phase shift in the complexity landscape of LLL.
For randomized algorithms, the complexity drops from $\Omega(\log \log n)$ to $O(\log^* n)$ and for deterministic, the drop is from $\Omega(\log n)$ to $O(\log^* n)$.

% In this paper, we focus on problems that satisfy an exponential LLL criterion $p2^d < 1$.
% Our first contribution is to show that $p < 2^{-d}$ is a \emph{sharp threshold} in the following sense.
% We give a simple and natural deterministic algorithm that solves the LLL problem in time $O(d^2 + \log^* n)$ under this criterion
%  if each variable affects at most two events. \ym{we should earlier on say what it means that a variable affects an event}
% For constant degree graphs, our algorithm is optimal in two ways. 
% First, it matches is a  $\Omega(\log^* n)$ lower bound by Chung et al. that holds for \emph{any} LLL criterion~\cite{SuLLL2017}.
% Secondly, for $p 2^{\gdeg} \geq 1$, there is a lower bound of $\Omega(\log_{\gdeg} \log n)$ rounds, that even holds for randomized algorithms---the lower bound for deterministic algorithms is actually $\Omega(\log_{\gdeg} n)$  \ym{$d$ or $\Delta$ in the LB?}. 
% This means, that the $O(\log^* n)$ runtime which we obtain is unachievable for any criterion that is weaker than $p2^{\gdeg}<1$.
% In particular, crossing the threshold $p = 2^{-\gdeg}$ introduces a fundamental phase shift in the complexity landscape of LLL.
% For randomized algorithms, the complexity drops from $\Omega(\log \log n)$ to $O(\log^* n)$ and for deterministic, the drop is from $\Omega(\log n)$ to $O(\log^* n)$.

Let us explain our first result in more detail:
Since every variable affects at most two events, one can assume that the variables are located on the edges of the dependency graph. 
Our algorithm iterates over these edges and deterministically fixes the random variables shared by the endpoints of the edge.
The order of the edges can be adversarial.
In particular, the assigned values are never changed once they are fixed.
We show that independent of the execution history, we can always find an assignment for the variables on an edge that increases the (conditional) probabilities of the corresponding bad events to occur by a factor of at most $2$ per edge that we consider. 
Since there are at most $d$ edges incident on any event, at the end of this process, the probability of any bad event is at most $p \cdot 2^{d} < 1$.
As all random variables in the probability space are fixed, there is no randomness left and hence, the probability must be $0$.
% As each event is incident to at most $d$ edges its failure probability at the end of the algorithm is $p2^d<1$, that is, as there is no randomness it is zero. \ym{Make this proper english}
%Very informally, the factor $2$ relates to the fact that two events are affected and, in expectation, the factor needs to be $1$ for both sides.~%\ju{this sentence could be dropped}
% See \cref{} for more details.

\begin{restatable}{theorem}{subteclll}
	Consider an LLL instance with the criterion $p < 2^{-\gdeg}$ and where every random variable affects at most $2$ bad events.
There is a \emph{sequential} and \emph{local} deterministic process that computes an assignment to the variables that avoids all bad events.
\label{thm:subTechnicalLLL}
\end{restatable}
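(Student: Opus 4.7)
The plan is to associate random variables with edges of the dependency graph: since each variable influences at most two events, it sits on the unique edge between the two events it affects (or is trivially attached to a single event if it influences only one), and we may group multiple variables shared by the same pair of events into one composite ``edge variable''. The sequential procedure then iterates over these edges in an arbitrary, possibly adversarial, order, and on each edge deterministically fixes its variables to a concrete value that is never revised later. Computing the conditional probabilities needed to make this choice is a local operation, since an event $\calE_i$ depends only on variables that sit on edges incident to $\calE_i$, so the entire procedure is local in the demanded sense.

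The heart of the argument is a single averaging claim: when we are about to process an edge $e=\{\calE_i,\calE_j\}$, and $H$ denotes the partial assignment produced so far, there exists a value $x$ for the variables of $e$ with
\[
  \prob{\calE_i \mid H, X_e = x} \leq 2\,\prob{\calE_i \mid H} \quad\text{and}\quad \prob{\calE_j \mid H, X_e = x} \leq 2\,\prob{\calE_j \mid H}.
\]
I would prove this by applying Markov's inequality to $\prob{\calE_i \mid H, X_e}$ viewed as a nonnegative random variable in $X_e$: its prior expectation equals $\prob{\calE_i \mid H}$, so the prior probability that it strictly exceeds twice its mean is strictly smaller than $1/2$; the same bound holds for $\calE_j$, and a union bound shows that the two ``bad'' sets cannot cover the whole probability space of $X_e$, forcing the existence of a good value $x$.

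A straightforward induction on the number of processed edges, using the averaging claim at every step and the fact that assigning variables on an edge not incident to $\calE_i$ does not alter its conditional probability, yields $\prob{\calE_i \mid \text{final assignment}} \leq p\cdot 2^{\gdeg} < 1$ for every event $\calE_i$, where the strict inequality uses the hypothesis $p<2^{-\gdeg}$. Since the final assignment fixes every variable on which $\calE_i$ depends, this conditional probability lies in $\{0,1\}$ and must therefore equal $0$, so no bad event is realized. The only step that demands a small amount of care, and which I would flag as the main technical point rather than a genuine obstacle, is verifying the \emph{strict} form $\prob{Y > 2\,\E[Y]} < 1/2$ (not merely $\leq 1/2$) for nonnegative $Y$; this strictness is what allows the two-endpoint union bound to stay strictly below $1$ and is ultimately what preserves the slack $p\cdot 2^{\gdeg} < 1$ at the end of the induction.
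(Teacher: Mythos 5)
Your proof is correct, and its overall structure --- placing (composite) variables on the edges of the dependency graph, fixing them in an arbitrary, adversarial order, guaranteeing a multiplicative increase of at most $2$ per incident edge, and concluding from $p\cdot 2^{\gdeg}<1$ that the fully conditioned probability of each $\calE_i$ must be $0$ --- is exactly the paper's. The one place you genuinely diverge is the per-edge existence claim and how it is proved: the paper establishes the stronger statement that some value $y$ makes the two increase ratios \emph{sum} to at most $2$, by observing that the expectation of that sum over the distribution of $X_e$ equals exactly $2$ (linearity of expectation, then an averaging contradiction), whereas you establish only that each ratio is at most $2$, via strict Markov plus a union bound. Your route is sound: for a nonnegative $Y$ with $\E[Y]>0$ one indeed has $\pr[Y>2\E[Y]]<\tfrac{1}{2}$ strictly (if that event had probability at least $\tfrac12$, restricting the expectation to it would already exceed $\E[Y]$), and in the degenerate case $\pr[\calE_i\mid H]=0$ the ``bad'' set of values is empty, so the union bound leaves a value of positive probability that is simultaneously good for both endpoints --- which is all your induction needs, and the final step (a fully determined event with probability less than $1$ has probability $0$) matches the paper. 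What the paper's formulation buys, and what your weaker per-edge guarantee would not give for free, is the ``sum at most $2$'' statement and its weighted variant, which is precisely the invariant the authors later carry over to $r=3$ through representable triples; for the $r=2$ theorem itself the two arguments are interchangeable, yours being slightly more off-the-shelf and the paper's a one-line exact computation.
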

The algorithm of \Cref{thm:subTechnicalLLL} is \emph{local} in the sense that it can be run in parallel with the help of a suitable edge-coloring of the dependency graph.
The edge-coloring can be computed, e.g., with the algorithm from \cite{panconesi-rizzi}, which implies the following corollary.
\begin{corollary}\label{cor:rank2}
	Consider an LLL instance with the criterion $p < 2^{-\gdeg}$ and where every random variable affects at most $2$ bad events.
	There is an $O(d + \log^* n)$ round deterministic distributed algorithm that solves the LLL problem.
	\label{thm: rank2local}
\end{corollary}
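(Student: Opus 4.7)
The plan is to derandomize the combinatorial content of \Cref{thm:subTechnicalLLL} into a fast distributed algorithm by scheduling its edge-updates with an edge-coloring of the dependency graph. Since every random variable affects at most two bad events, we may view each variable as living on an edge of the dependency graph $G$, whose maximum degree is at most $\gdeg$. \Cref{thm:subTechnicalLLL} gives a sequential process that walks over the edges of $G$ in an arbitrary order and, for each edge, fixes the variables on that edge using only information available at the two endpoints (namely, the current conditional probabilities of the two incident events). This locality is exactly what allows parallelization.

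First, I would compute a proper edge-coloring of $G$ with $O(\gdeg)$ colors. This can be done deterministically in $O(\gdeg + \logstar n)$ rounds using the algorithm of Panconesi and Rizzi~\cite{panconesi-rizzi}. Since a proper edge-coloring partitions $E(G)$ into $O(\gdeg)$ matchings, no two edges that share an endpoint receive the same color.

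Next, I process the color classes one by one. For each color class $M_i$, every edge $e \in M_i$ simulates the sequential step of \Cref{thm:subTechnicalLLL}: its two endpoints exchange their current conditional distributions (which depend only on the variables on already-processed incident edges), and then locally compute and commit the values of the variables on $e$. Because $M_i$ is a matching, no two edges in $M_i$ share an endpoint, so their updates are mutually non-interfering and can be executed in parallel in $O(1)$ communication rounds. After all $O(\gdeg)$ color classes have been handled, every variable is fixed, and the resulting assignment coincides with one that a sequential run of \Cref{thm:subTechnicalLLL} could have produced (with edges ordered by color class). By \Cref{thm:subTechnicalLLL}, this assignment avoids all bad events.

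Summing costs, the edge-coloring takes $O(\gdeg + \logstar n)$ rounds and the $O(\gdeg)$ parallel passes take $O(\gdeg)$ further rounds, for a total of $O(\gdeg + \logstar n)$ rounds. There is no real obstacle beyond checking that simulating one sequential step per color class is faithful; this follows because the decision made at an edge in \Cref{thm:subTechnicalLLL} depends only on the two incident events and their already-fixed incident variables, and edges of the same color share neither an endpoint nor a variable.
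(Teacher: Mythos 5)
Your proposal is correct and follows essentially the same route as the paper: parallelize the sequential process of \Cref{thm:subTechnicalLLL} via an $O(d)$-edge-coloring of the dependency graph computed in $O(d+\log^* n)$ rounds, then sweep the color classes, using that same-colored edges share no endpoint and hence no event. The only additions beyond the paper's proof are the (correct) explicit checks of faithfulness to some sequential order, which the paper leaves implicit.
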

% \Cref{cor:rank2} states that one can \emph{derandomize} the LLL under the criterion $p2^{\gdeg}<1$ if every variable affects at most two events.\ym{Does this sentence make sense?}

\paragraph{Beyond Two Events per Variable.} 
In many distributed problems, the output of a node may affect many of its neighbors.
Hence, it is natural to study the LLL setting where the random variables influence more than two events.
Our core question is:
\begin{quotebox}
If $r$ is an upper bound on the number of events that any variable affects, what is the weakest LLL criterion that allows us to solve the problem deterministically in $O(\poly(\gdeg)+\log^* n)$ time?
\end{quotebox}
Our algorithms for the case of $r = 2$ generalize in a straightforward way.
However, this comes with a cost.
First, every time our algorithm fixes an assignment for a random variable, the probabilities of the affected events to occur may increase by a factor of $r$.
Furthermore, every event may depend on $\binom{d}{r - 1}$ different variables\footnote{In principle, the number of variables could be larger. However, it is straightforward to reformulate the instance in a way that combines variables affecting the same $r$ events.}.
Hence, the criterion we need is $p<r^{-\binom{d}{ r - 1}}$.
The second question we want to answer is whether this criterion is inherent to the case of larger $r$.

% \begin{center}
% \emph{Does the $2$ in the $2^{d}$ come from the fact that each random variable affects (at most) two events or is there a similar jump at $2^{d}$ if each random variable can affect more than two events?}
% \end{center}

Our main contribution is to answer the aforementioned questions for the case that $r$ is $3$. Furthermore, most parts of our proof---but unfortunately not all---generalize to all $r>3$.  
If each variable affects at most three events and $p<2^{-\gdeg}$ we provide a sequential process that iterates over the random variables and assigns values to them deterministically; the assigned values are never
changed once they are fixed.
Just as in the case $r=2$, this process can be run in parallel and as a corollary, we obtain a distributed algorithm with the same asymptotic runtime as for $r=2$.
Surprisingly, we do not need any compromise in the LLL criterion. % and the threshold of $p<2^{-\gdeg}$ as in the case of $r = 2$.
\begin{center}
\begin{minipage}{0.9\textwidth}
\begin{mdframed}[hidealllines=true, backgroundcolor=gray!20]
\emph{There is a sharp threshold phenomenon for the distributed complexity of the Lov\'asz Local Lemma at $p=2^{-\gdeg}$ if variables can affect at most three events. The phase shift is from $\Omega(\log n)$ deterministic and $\Omega(\log \log n)$ randomized time above the threshold to $O(\poly \gdeg +\logstar n)$ strictly below the threshold. The threshold is the same as in the case where each variable affects at most two events.
}
\end{mdframed}
\end{minipage}
\end{center}
%It occurs at exactly the same threshold of 
Formally we prove the following statements for the case $r=3$.
\begin{restatable}{theorem}{mainLLL}\label{thm:mainTechnicalLLL}
	Consider an LLL instance with the criterion $p2^{\gdeg}<1$ and where every random variable affects at most $3$ bad events.
There is a \emph{sequential} and \emph{local} deterministic process that computes an assignment to the variables that avoids all bad events.
\end{restatable}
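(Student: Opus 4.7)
My plan is to lift the sequential-edge-assignment scheme from the proof of \Cref{thm:subTechnicalLLL} to the setting where variables may simultaneously affect three events. We still iterate over variables in an arbitrary order and never retract a value; the challenge is to design a cleverer accounting, since an event $E$ may now depend on up to $\binom{d}{2}$ rank-$3$ variables (one per pair of its neighbors) while the LLL log-budget per event is still only $d$. A constant per-variable charge at every incident event is therefore too expensive and some form of amortized bookkeeping is needed.

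\textbf{Setup.} I would regard the variables as hyperedges of size at most $3$ in a dependency hypergraph on the events. For each event $E$, I would split its incident variables into the rank-$2$ ones (each contributing a distinct neighbor of $E$) and the rank-$3$ ones (each contributing a pair of neighbors, i.e., an edge of the \emph{link graph} of $E$).

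\textbf{Key existence step.} For any already-fixed history $h$ and any unprocessed rank-$3$ variable $x$ with affected events $A, B, C$, I would show that the domain of $x$ contains a value $v$ for which a carefully tailored combination of the multiplicative increases $\Pr[A\mid h,x=v]/\Pr[A\mid h]$, $\Pr[B\mid h,x=v]/\Pr[B\mid h]$, $\Pr[C\mid h,x=v]/\Pr[C\mid h]$ satisfies a chosen bound. The argument is a Markov-type averaging over $v$ weighted by $\Pr[x=v\mid h]$: since each of the three conditional expectations of the increases equals $1$, the measure of values exceeding any individual threshold is at most the reciprocal of that threshold, and the thresholds can be tuned so that the union of the three ``bad'' sets has measure strictly less than $1$. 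The $r = 2$ union-bound style slack (two bad sets of measure $< 1/2$) disappears with three events, so the bound must be \emph{asymmetric}, picking out one of the three events as bearing a small increase and allowing larger (but controlled) increases at the other two.

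\textbf{Global charging.} With this lemma in hand, I would label one of the three events of each rank-$3$ variable as the \emph{free} endpoint and the other two as \emph{charged}, attaching the per-variable charge to the corresponding pair-slot in the link graph of each charged event. The final step is showing that the orientations can be chosen consistently so that every event accumulates total log-cost at most $d$; combined with the rank-$2$ contributions this gives final conditional probability at most $p \cdot 2^d < 1$ at every event, forcing it to be $0$ once all variables are fixed.

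\textbf{Main obstacle.} The hardest part is the global consistency of the charging (the last step). In the $r=2$ proof no such orientation is needed: each variable has only two endpoints and each event has only $d$ incident variables, so the bound falls out immediately. For $r=3$ the orientation must reconcile the local freedom granted by the existence lemma with a uniform per-event charged-degree constraint, and I would expect it to be proved by a matching or degree-constrained-orientation argument on the $3$-uniform dependency hypergraph that specifically exploits the uniformity $r=3$. This is, I suspect, exactly the step the paper flags as not extending cleanly beyond $r=3$, while the averaging-based existence lemma should generalize with essentially the same Markov calculation for any $r$.
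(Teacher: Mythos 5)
There is a genuine gap, and it sits exactly where you put the weight: the combination of a fixed-threshold existence lemma with a global charging/orientation step cannot give the bound $p\cdot 2^{d}<1$. Your key existence step is a Markov/union-bound argument: since each of the three increases has conditional expectation $1$, you can find a value whose increases are at most $(t_A,t_B,t_C)$ whenever $1/t_A+1/t_B+1/t_C<1$. But then every rank-$3$ variable inflicts a multiplicative factor strictly bounded away from $1$ on at least two (in fact, under Markov you cannot even guarantee a factor $\leq 1$ at the ``free'' endpoint) of its three events, and an event can be incident to $\binom{d}{2}=\Theta(d^2)$ rank-$3$ variables. No orientation can repair this: the total number of (variable, charged-event) incidences is twice the number of variables, which in a dense instance is about $\tfrac{2}{3}n\binom{d}{2}$, so some event necessarily receives $\Omega(d^2)$ charges, and its accumulated factor is $2^{\Theta(d^2)}$ rather than $2^{d}$. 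Making the thresholds asymmetric or $d$-dependent does not escape this counting; it only trades which endpoint pays. In effect your scheme recovers the weaker criterion $p<r^{-\binom{d}{r-1}}$ that the paper explicitly sets out to beat, not $p<2^{-d}$.

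The paper avoids this by making the bookkeeping \emph{adaptive and edge-based} rather than charging a constant per variable: it maintains two values $\varphi_e^u,\varphi_e^v\in[0,2]$ with $\varphi_e^u+\varphi_e^v\leq 2$ on every edge of the dependency graph (property $P^*$), so that each event's conditional probability is bounded by $p\prod_{e\ni v}\varphi_e^v\leq p\cdot 2^{d}$ --- the budget is per dependency-graph edge (at most $d$ per event), and arbitrarily many variables sharing the same pair of edges are absorbed cumulatively. The existence of a good value for each variable (\Cref{lem:main}) is then \emph{not} a union-bound statement with fixed thresholds: whether the triple of cumulative increases can still be written in this pairwise form is a nonlinear condition (``representable triples,'' \Cref{def:rep}), and the paper proves existence by showing the set of representable triples is incurved (\Cref{lem:incurved}), which rests on an explicit boundary surface and its convexity (\Cref{lem:ess}, \Cref{lem:convex}), combined with an averaging argument (\Cref{lem:bad}). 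The paper states explicitly that linearity of expectation alone does not suffice for $r=3$. Your final remark also inverts the paper's generalization bottleneck: the averaging step and the bookkeeping extend to all $r$; what does not yet extend is finding the closed form of the boundary surface and proving its convexity --- there is no orientation or matching argument anywhere in the paper's proof.
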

Here the process is \emph{local} in the sense that the choice of the value for a random variable only depends on the $1$-hop neighborhood of the variable in the dependency graph. Thus, we can fix variables in parallel as long as we do not simultaneously fix variables that influence the same event. 
We obtain our desired runtime by first $2$-hop coloring\footnote{A proper coloring of a graph is a $2$-hop coloring if nodes in distance at most two do not have the same color.} the graph with $O(d^2)$ colors.
Then, we iterate through the color classes and fix \emph{all} the random variables of the nodes in the current color class.
Using the algorithm by Fraigniaud, Heinrich and Kosowski, the coloring can be found in $\widetilde{O}(d) + \log^* n$ time~\cite{fraigniaud16}.

% We obtain the following corollary.
\begin{restatable}{corollary}{comadi} \label{cor:mainDistr}
	Consider an LLL instance with the criterion $p<2^{-\gdeg}$ and where every random variable affects at most $3$ bad events.
	There is an $O(d^2 + \log^* n)$ round deterministic distributed algorithm that solves the LLL problem. 
	\label{cor: localalgo}
\end{restatable}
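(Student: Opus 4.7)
I would prove \Cref{cor:mainDistr} by parallelizing the sequential local process from \Cref{thm:mainTechnicalLLL} via a $2$-hop coloring of the dependency graph. Concretely, the algorithm has two phases: a preprocessing phase that computes the coloring, and an execution phase that iterates through color classes.

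For the preprocessing, I would compute a proper coloring of the square of the dependency graph. Since every node of the dependency graph has degree at most $d$, its square has maximum degree $O(d^2)$ and hence admits a coloring with $O(d^2)$ colors; the algorithm of Fraigniaud, Heinrich and Kosowski~\cite{fraigniaud16} produces such a coloring in $\widetilde{O}(d)+\logstar{n}$ rounds in the \LOCAL model.

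For the execution phase, I would iterate through the $O(d^2)$ color classes sequentially. In the iteration for color $c$, every event $u$ with color $c$ simultaneously assigns all of its still-unassigned incident variables using the local rule guaranteed by \Cref{thm:mainTechnicalLLL}. This parallel step is well defined and consistent with the sequential process for two reasons. First, any two events $u,u'$ sharing color $c$ have hop-distance at least three in the dependency graph, and hence share no common variable, so no variable is assigned by two different processes at once. Second, the same distance bound implies that a variable of $u$ and a variable of $u'$ cannot affect any common event; since the rule of \Cref{thm:mainTechnicalLLL} only looks at the $1$-hop neighborhood of a variable, the parallel assignments performed in iteration $c$ coincide with those produced by any sequential ordering that processes all color-$c$ variables contiguously, so the correctness of the sequential process transfers.

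Each color class requires only $O(1)$ communication rounds, so the execution phase costs $O(d^2)$ rounds, giving a grand total of $\widetilde{O}(d)+O(d^2)+\logstar{n}=O(d^2+\logstar{n})$ rounds. When the process terminates, \Cref{thm:mainTechnicalLLL} ensures that the computed assignment avoids every bad event. The only point that needs care is the second item above---preserving the $1$-hop locality of the rule when many variables are fixed in parallel---but this is exactly what the $2$-hop-distance condition guarantees, so I expect no further obstacle beyond invoking \Cref{thm:mainTechnicalLLL} as a black box.
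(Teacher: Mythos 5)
Your proposal is correct and matches the paper's own proof essentially verbatim: the same $2$-hop coloring with $O(d^2)$ colors via the algorithm of~\cite{fraigniaud16}, the same iteration through color classes with each node fixing all of its unassigned variables, and the same two observations (no conflicting variable fixings and consistency with some sequential order handled by \Cref{thm:mainTechnicalLLL}'s order-obliviousness and $1$-hop locality). No gaps to report.
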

This deterministic algorithm improves on the previously best randomized algorithm which has a runtime of $\omega(\poly\log\log n)$ \cite{newHypergraphMatching} (see related work section for more details).

% For constant $d$ this implies a deterministic $O(\logstar n)$ algorithm to solve any distributed LLL with $p < 2^{-\gdeg}$ where each variable affects at most $3$ bad events.
% Our result has a direct interpretation as a derandomization method for LLL instances with an exponential criterion.
In a recent breakthrough, Ghaffari, Harris, and Kuhn gave a general derandomization method for \LOCAL algorithms that requires, per node, a small \emph{global} failure probability, i.e., small as a function of $n$.
Put into this frame, our result is a first step towards a general derandomization method under a \emph{local} bound on the failure probability.
Notice that even the very strong criterion $p < 2^{-d}$ is much less demanding than $p < 1/n$ in the case of low degree graphs.
Generalizing it to weaker criteria and random variables that affect arbitrarily many events would be a major breakthrough.

\paragraph{Applications}
In the sinkless orientation problem, the goal is to assign an orientation to each edge such that no node forms a sink.
It is the prime example of a problem that is just very slightly above the "exponential threshold" for which our derandomization works. In fact the failure probability for each node is exactly at the threshold $2^{-d}$ if each edge is oriented randomly and the problem also served for proving the randomized $\Omega(\log\log n)$ lower bound \cite{LLL_lowerbound} and deterministic $\Omega(\log n)$ \cite{Chang2016a} lower bound for LLL in the regime $p\geq 2^{-d}$. It has been studied vividly since its first appearance, see e.g.~\cite{LLL_lowerbound, Chang2016a}, and has also become an important subroutine for solving classic problems such as edge coloring~\cite{ghaffari17, Splitting17}, and hence, studying its relaxations is a canonical step for future work. A natural extension to this problem is to consider orientations in \emph{hypergraphs}, where each edge is allowed to contain more than two nodes. A hyperedge is assigned an orientation towards one of its nodes, i.e., exactly one node is chosen as the head of the hyperedge and the rest are chosen as tails.
For example, the following setting falls into our regime:
Consider hypergraphs of rank 3, i.e., the maximum number of nodes per edge is 3. 
Node $v$ is a sink if it is the head of all of its hyperedges.
The goal is to compute 3 orientations of the edges such that each node $v$ is not a sink in at least two of the orientations. For the parameters to work out, the degree of the dependency graph must be at least 7.

Another problem that is slightly above the exponential threshold is the weak splitting problem~\cite{SLOCAL17} which is defined as follows. You are given a bipartite graph $B=(V\cup U, E)$ with the objective to color the nodes in $U$ with two colors such that each node in $V$ has at least one neighbor in $U$ of each color. 
The seemingly simple problem has recently gotten a lot of attention because it is \pslocal-complete, that is, an efficient, i.e., $\polylog n$ rounds, deterministic algorithm for the problem would imply an efficient deterministic algorithm for many classic problems such at MIS and $\Delta+1$-vertex coloring. Interpreting the vertices in $U$ as variables---the maximum degree of nodes in $U$ corresponds to our parameter $r$, i.e., how many events share a variable--- 
makes it a canonical problem that fits the LLL framework with very weak guarantees. However, the standard case of weak splitting with $2$ colors and the requirement that every node sees at least $1$ neighbor of each color is slightly above the "exponential threshold". Hence, it does not fall into the regime that we can solve with the results in this work. As a matter of fact, there is a $\Omega(\log n)$ lower bound for the problem \cite{weakSplitting19} which is proven through a reduction from the sinkless orientation problem. 

By  weakening the conditions of the weak splitting problem a tiny bit, the problem also fits into our framework and we immediately acquire new algorithms. As an example, consider the weak splitting with $r\leq 3$, where $r$ is the maximum degree of $U$, using $16$ colors and requiring every node to see at least $2$ colors. Notice that a natural interpretation of this setting is a hypergraph edge-coloring, where the nodes of $U$ correspond to rank $r$ hyperedges.
These type of weak variants have also been studied in the aforementioned submission \cite{weakSplitting19}, however, with the objective to show that they are \pslocal-complete when $r$ can be of arbitrary size. Even though the runtime of our algorithms is polynomial in the maximum degree and thus, mostly efficient for very small degrees, we believe that derandomizing weak splitting variants might pave the way to new ideas, also for solving  the original problem. Variants of the weak splitting have also appeared under the name of frugal edge coloring, e.g., in \cite[Definition 2.5]{HarrisHypergraphMatching18}.

\paragraph{Techniques and Generalization}
Now, we discuss which parts of our approach generalize to the case where $r>3$, and which do not.

In the case of $r=3$, we show that, informally speaking, there is a way to fix the random variables such that no node is unlucky too often.
%In other words, if the probability of a particular bad event has increased a lot, it must have neighboring bad events whose failure probability is (relatively) low.
%We show this by careful bookkeeping during the execution of our algorithm that shows that the average increase per neighbor is at most $2$.
The rough intuition is that if the probability of a particular bad event $\badev_v$ to occur has increased a lot, it must have neighboring bad events whose probability is (relatively) low.
Hence, when the next random variable affecting these events is to be fixed, there is a way to make sure that the probability of $\badev_v$ does not increase much (or even decreases) since incurring a large increase for the probabilities of the neighboring events is acceptable.
We show this by careful bookkeeping during the execution of our algorithm.
The main technical challenge is to show that each time we fix a random variable, there is a good choice, i.e., a choice that makes sure that certain values we keep track of never grow too much.
In the end, these values bound the increase of the probabilities of the individual bad events, 
The proof is very technical, but ultimately, we reduce existence of a good choice for each random variable to the convexity of a certain function.

For the case of $r = 3$, we were able to express this function as a relatively clean closed expression.
For the case of $r > 3$, finding such an expression and using this knowledge to show that the assoiciated function is convex is the only challenge in obtaining full generality.
% that depends on $r$ (see \Cref{sec: upper} for many more details). 
All the other parts of our method generalize to higher ranks.
We believe that the approach is viable to higher $r$, but needs new analytic insights that are posed as an exciting direction for future work.

% While our results only apply up to the case where each variable affects at most $r = 3$ bad events, most of our techniques generalize to higher ranks.
% The only ingredient that is specific to $r = 3$ is a technical lemma, where we show the convexity of a certain function (described in detail in \cref{sec: upper}).
% Indeed, simplifying this function was a key ingredient towards achieving our result.
% To generalize our results to higher values of $r$, one needs to find a way to analyze (and simplify) such functions.
% Unfortunately, this task is tedious since the function becomes more complex as $r$ goes up.
% Regardless, we believe that this is a feasible task and a very interesting open question for future work.\ym{Discuss this paragraph with Sebastian. Not sure what we want to write}

\begin{conjecture}
	Consider an LLL instance with the criterion $p<2^{-\gdeg}$ and where every random variable affects at most $r$ bad events.
	For any $r$, there is a distributed algorithm that solves the LLL problem with criterion $p < 2^{-d}$ in time $O(d^{2} + \log^* n)$.
	\label{conj: highrank}
\end{conjecture}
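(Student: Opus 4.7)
The plan is to generalize the two-step strategy that works for $r=3$ in \Cref{thm:mainTechnicalLLL}. First, design a sequential local deterministic process that visits variables in some fixed order and assigns values without ever revising, ensuring that at termination the conditional probability of every bad event $\badev$ has been amplified by a factor of at most $2^\gdeg$ over its initial value $p$. Combined with the criterion $p < 2^{-\gdeg}$, this forces the final conditional probability of each $\badev$ to be strictly less than $1$; since all random variables are fixed at the end, the actual probability is either $0$ or $1$, so it must be $0$ and no bad event occurs. The distributed wrapper computes a distance-$c$ coloring of the variable interaction graph (two variables adjacent iff they share an event) with $\poly(\gdeg)$ colors in time $\poly(\gdeg) + \logstar n$ via \cite{fraigniaud16}, then processes color classes one by one; since the choice of value for a variable depends only on the $1$-hop neighborhood in the dependency graph, variables in the same color class can be fixed in parallel without interference.

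The sequential step maintains, for each bad event $\badev$, a potential that records how much $\Prob{\badev}$ has already grown conditional on the history, weighted by the remaining slack in its neighborhood. When it is time to fix a variable $x$ affecting events $\badev_1, \ldots, \badev_r$, pick the value from the range of $x$ that minimizes the worst-case future potential among the $\badev_i$'s. The invariant, inherited and generalized from the $r=3$ analysis, is that although fixing $x$ to a particular value may sharply increase $\Prob{\badev_i \mid \text{history}}$ for some $i$, an averaging over the range of $x$ shows that the values can be balanced: the increases in one direction are dual to decreases elsewhere, so a suitable choice keeps the total amplification for every $\badev_i$ under the $2^\gdeg$ ceiling.

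The main obstacle, already flagged in the excerpt, is the convexity step. For $r=3$ the existence of a good value reduces to the convexity of a specific closed-form function in the post-assignment probabilities of the three affected events, which can be verified directly. For arbitrary $r$ this becomes an $r$-variable expression---a sum over the range of $x$ of products of shifted conditional probabilities, weighted by the current potentials---whose structure is considerably more intricate. The work plan is (a) to write the function in a symmetric, structurally tractable form, possibly as a log-sum-exp or as a polytope-constrained polynomial, and (b) to prove convexity either by induction on $r$, decomposing an $r$-dimensional instance as a mixture of $(r-1)$-dimensional instances for which convexity already holds, or by identifying an underlying log-concavity or submodularity property of the associated measure.

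If the convexity step succeeds, the remainder of the argument is a direct extension of the framework of \Cref{thm:mainTechnicalLLL}: the bookkeeping lemmas, the local choice rule, and the coloring-based parallelization all carry through with only notational changes, yielding the claimed $O(\gdeg^2 + \logstar n)$ deterministic distributed algorithm. The regime in which convexity is likely to be hardest is when the $r$ affected events have very unequal current conditional probabilities, because the potential is then dominated by one event and the balancing argument becomes delicate; this boundary case is the one I would attack first, perhaps via a perturbative or continuity argument, before tackling the general symmetric situation.
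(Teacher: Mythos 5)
There is a genuine gap, and it is exactly the one the paper itself leaves open: the statement you are addressing is stated in the paper as a \emph{conjecture}, and your proposal does not close it. Your outline faithfully reproduces the paper's intended framework for general $r$ --- bookkeeping values on the edges of the dependency graph with per-edge sums at most $2$, so that each event's probability is amplified by at most $2^{\gdeg}$, an averaging argument showing that if every value of the variable being fixed were ``evil'' then a convex combination of points outside the representable set would land inside it, and a $2$-hop coloring to parallelize in $O(d^2+\log^* n)$ rounds. All of that indeed carries over from the $r=3$ case (the analogues of \Cref{lem:bad} and the wrapper of \Cref{cor:mainDistr} are routine). But the load-bearing step is the geometry of the set of representable $r$-tuples: for a variable affecting $r$ events one must distribute the $r$ probability increases over the $\binom{r}{2}$ edges of the clique they span, and the existence of a good value hinges on this set being \incurved{} (equivalently, on the convexity of the function describing its boundary, as in \Cref{lem:ess,lem:convex}). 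For $r=3$ this required deriving an explicit closed form $f(a,b)$ via a nontrivial optimization and then verifying positive definiteness of its Hessian; for $r\geq 4$ no such closed form is known, and it is not known whether the boundary is convex at all.

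Your treatment of this step is a list of candidate strategies (``write the function as log-sum-exp or a polytope-constrained polynomial'', ``induct on $r$ by decomposing into $(r-1)$-dimensional instances'', ``find log-concavity or submodularity'') with no argument that any of them succeeds. In particular, the proposed induction has no obvious base-to-step mechanism: a representable $r$-tuple is constrained by edge values shared between \emph{all} pairs of the $r$ events, so fixing one coordinate does not reduce the feasibility region to a product or mixture of lower-rank regions in any evident way. Since the entire conclusion (``the remainder of the argument is a direct extension'') is conditioned on this unproved convexity/incurvedness claim, what you have is a restatement of the paper's open problem together with its known reduction, not a proof of the conjecture.
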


\begin{remark}
	If we tighten the LLL criterion in \Cref{conj: highrank} to $p < 2^{-\Omega(d^2 \cdot \log d)}$, we can use the deterministic algorithm by Fischer and Ghaffari~\cite{ManuelaLLL17} to solve the LLL in time $O(d^2 + \log^* n)$ in the following way. First, in time $\widetilde{O}(d) + \log^* n$, we compute a $2$-hop coloring of the dependency graph with $O(d^2)$ colors and treat this coloring as a $(O(d^2), 0)$-network decomposition. Then, using this network decomposition with the algorithm by Fischer and Ghaffari, we obtain an LLL algorithm with runtime $O(d^2 + \log^* n)$. 
\end{remark}

% We continue with discussing related work and putting our results into context.
\paragraph{Related Work.} 
A straightforward distributed implementation of the resampling framework of Moser and Tardos~\cite{MoserTardos10} yields an $O(\log^2 n)$ algorithm for the distributed LLL problem under the $ep(d + 1) < 1$ criterion.
Recently, Chung et al.~\cite{SuLLL2017} designed an algorithm with runtime $O(\log n \cdot \log^2 d)$ which was then improved to $O(\log n \cdot \log d)$ by Ghaffari~\cite{GhaffariImproved16}. 
 The runtime landscape changes if we make the LLL criterion stronger, i.e., make the bad events less likely to happen.
Under the $epd^2 < 1$ criterion, Chung et al.~\cite{SuLLL2017} gave an $O\big( \log_{1/epd^2} n\big)$ time algorithm for LLL.
For an even stronger polynomial criterion of $epd^{32} < 1$ and assuming that $d = O(\log^{1.5} \log n)$, Fischer and Ghaffari gave a $2^{O(\sqrt{\log \log n})}$ time algorithm~\cite{ManuelaLLL17}.
The state-of-the-art under the polynomial criteria, is a $ \exp^{(i)} O\big( (\log^{(i + 1)} n)^{0.5}  \big) = 2^{o(\sqrt{\log \log n})} \ll \poly(\log n)$~\cite{newHypergraphMatching}\footnote{Here $\exp^{(i)}$ stands for a power tower of height $i$ and $\log^{(i)}$ stands for a logarithm iterated $i$ times.} round algorithm for any $i$ under the $d^8 p = O(1)$ criterion.
% The deterministic $O(d^2+\logstar n)$-round runtime of \Cref{cor:mainDistr} under exponential criteria compares to previous randomized $O((\log n)/d)$ \cite{SuLLL2017}, randomized $2^{O(\sqrt{\log \log n})}$-round if $d = O(\log^{1.5} \log n)$ ....hmm, do we want to have all here again? I just want one good sentence? \ym{Please make good}
Subsequent to the current paper Rhozo\v{n} and Ghaffari obtained a major breakthrough in the area of local distributed graph algorithms by obtaining a simple, deterministic and efficient algorithm to solve the so called network decomposition problem \cite{Rhozon19}. Among many other implications this yields a $\polylog\log n$ randomized algorithm for the distributed LLL problem under some fixed polynomial criterion of the form $d^{O(1)}p<1$.

 Chang and Pettie~\cite{ChangHierarchy17}  underlined the general importance of the LLL problem for distributed randomized algorithms. They showed that in  constant degree graphs, LLL is ``complete'' for problems solvable in sublogarithmic time in the following sense:
Any algorithm that has a runtime of $o(\log n)$ can be automatically turned into a new algorithm that runs in time $O(T_{\textrm{LLL}})$, where $T_{LLL}$ stands for the randomized complexity of solving the LLL problem under \emph{any} polynomial criterion.
The  authors make the following conjecture.
\begin{conjecture}[\cite{ChangHierarchy17}]
Assume that $d \geq 2$.
There exists a sufficiently large constant $c$ such that the distributed
LLL problem can be solved in $O(\log \log n)$ time on bounded degree graphs, under the symmetric LLL
criterion $p < d^{-c}$.
\end{conjecture}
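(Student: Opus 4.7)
The plan is to use the standard two-phase shattering framework that underlies essentially all sublogarithmic randomized distributed algorithms. In phase one, perform a constant number of rounds of a randomized partial assignment of the variables (in the spirit of Moser--Tardos resampling, or a one-shot random sampling followed by a witness-tree argument). Under a sufficiently strong polynomial criterion $p < d^{-c}$ with $d = O(1)$ and $c$ a large enough constant, the standard percolation/witness-tree analysis shows that with high probability the bad events still unsatisfied after this phase form connected components of size $O(\log n)$ in the dependency graph.

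In phase two, each surviving component is handled in parallel by a deterministic LLL algorithm. On components of size $N = O(\log n)$, any deterministic algorithm running in $O(\polylog N)$ rounds yields $O(\polylog \log n)$ total complexity, and a tighter $O(\log N)$-round deterministic algorithm would give the conjectured $O(\log \log n)$ bound. The natural candidate here is the local deterministic process of the present paper: if one can argue that after the random pre-assignment each surviving sub-instance satisfies the exponential criterion $p' < 2^{-d'}$ in the conditional probability space, then \Cref{cor:mainDistr} (or its higher-rank extension \Cref{conj: highrank}) finishes in $O(d^{O(1)} + \log^* N)$ rounds, which for bounded degree simplifies to $O(\log^* \log n)$, comfortably within the $O(\log\log n)$ target.

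The hard part is precisely this polynomial-to-exponential criterion boosting. Formally, one must show that under $p < d^{-c}$ a suitable random sampling of a subset of variables produces, with high probability, a conditional sub-instance in which each surviving bad event has probability at most $2^{-d}$ in the conditional probability space. Such a conditioning argument does not appear in the existing shattering literature and seems to require a new joint analysis that tracks not merely the set of surviving events but also their conditional probabilities given the partial assignment; the existing witness-tree machinery is tailored to bounding the set of active events, not to controlling their conditional masses. A secondary obstacle is that \Cref{conj: highrank} itself is open for $r > 3$, so either the boosting argument must work for arbitrary $r$---inheriting the analytic difficulty of generalizing the convexity-based argument beyond $r = 3$---or the higher-rank case must be resolved independently. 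Together these two ingredients, the conditional-probability-aware shattering analysis and the full higher-rank extension, would establish the Chang--Pettie conjecture along the route sketched here.
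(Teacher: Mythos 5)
You have not proved anything here, and neither does the paper: the statement in question is stated verbatim as an open conjecture of Chang and Pettie~\cite{ChangHierarchy17}, quoted in the related-work discussion precisely because it is unresolved, so there is no proof in the paper to compare your argument against. Your own text concedes this implicitly, since the two ingredients that carry all the weight---the ``polynomial-to-exponential criterion boosting'' under conditioning, and the $r>3$ case of \Cref{conj: highrank}---are exactly the parts you declare open. A proposal whose essential steps are acknowledged conjectures is a research plan, not a proof.

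Beyond being incomplete, the specific route you sketch cannot work as stated. You propose a constant-round randomized shattering phase after which each residual component satisfies the exponential criterion in the conditional probability space, finished by \Cref{cor:mainDistr} in $O(\poly(d)+\log^* \log n)$ rounds; for bounded degree this would give a total of $o(\log\log n)$ rounds for every instance with $p<d^{-c}$. But sinkless orientation on $d$-regular graphs is such an instance once $d$ is a sufficiently large constant (each event occurs with probability exactly $2^{-d}$, and $2^{-d}<d^{-c}$ for large constant $d$), and it carries the $\Omega(\log\log n)$ randomized lower bound~\cite{LLL_lowerbound} that the paper itself invokes. Hence no $o(\log\log n)$-round preprocessing can, in general, drive the conditional probabilities of the surviving events below $2^{-d}$: the events that survive shattering are precisely those whose conditional probabilities did not collapse (they may well approach $1$), and if such boosting were possible the resulting algorithm would contradict the lower bound. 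Any correct proof of the Chang--Pettie conjecture must therefore spend $\Theta(\log\log n)$ rounds somewhere---in the standard framework, on solving the $O(\log n)$-size residual components deterministically under a polynomial criterion, which is the actual open difficulty---rather than finishing them in $O(\log^*\log n)$ with the exponential-criterion machinery of this paper; note also that even the subsequent network-decomposition breakthrough cited in the paper~\cite{Rhozon19} yields only $\polylog\log n$, not $O(\log\log n)$, under a polynomial criterion.
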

One aspect of their conjecture is to find a threshold criterion that allows for $O(\log \log n)$ time algorithms.
We show that in the exponential end of the spectrum, this threshold can be achieved and even broken without using any randomization.
This picture highlights another interesting question for future work:
What bounds can we achieve for LLL criteria between exponential and polynomial?

% We cannot prove this strong conjecture but we show that with the stronger exponential criterion $p < 2^{-\gdeg}$ we can even deterministically obtain a runtime of $O(\logstar n)$. 
% Put into this context, \Cref{conj: highrank} states that under the exponential criterion, we can obtain deterministic $O(\logstar n)$ algorithm if $d$ is constant \ym{Check whether this is true and how $d$ and $r$ relate}

All the related work that we surveyed so far are for randomized algorithms and there are very few deterministic results for the distributed LLL problem. 
A $\lambda \cdot n^{1/\lambda} \cdot 2^{\sqrt{\log n}}$ time algorithm under the criterion $p(ed)^\lambda < 1$ is known~\cite{ManuelaLLL17} and the state-of-the-art runtime of $ \exp^{(i)} O\big( (\log^{(i)} n)^{0.5}  \big)$ is by Ghaffari, Harris and Kuhn~\cite{newHypergraphMatching}.

% Shattering + bootstrapping and so on Ghaffari, Harris and Kuhn \cite{newHypergraphMatching} gave a deterministic  $ \exp^{(i)} O\big( (\log^{(i)} n )^{0.5} \big) \ll n^{1/c}$ rounds~ for any constant $c > 0$---note the $i+1$ in the randomized algorithm vs the $i$ in the deterministic algorithm which actually makes a huge difference \ym{very informal}. 

%\paragraph{Modelling}
%Please put some or all of your modelling sentences back here again so we can see whether to use them as a footnote. But for sure I need one of the sentences that I cannot find anymore in one of the paragraphs.

% All the related work that we surveyed so far are for randomized algorithms and 
%  there are very few deterministic algorithmic results for the distributed LLL problem. 
% Ghaffari + Fischer presented an xxx algorithm under criterion yyy. 
% Shattering + bootstrapping and so on Ghaffari, Harris and Kuhn \cite{newHypergraphMatching} gave a deterministic  $ \exp^{(i)} O\big( (\log^{(i)} n )^{0.5} \big) \ll n^{1/c}$ rounds~ for any constant $c > 0$---note the $i+1$ in the randomized algorithm vs the $i$ in the deterministic algorithm which actually makes a huge difference \ym{very informal}. 

%  Centralized deterministic results: Karthekeyan Chandrasekaran, Navin Goyal, and Bernhard Haeupler. Deterministic algorithms for the Lov´asz local lemma. In Pro. of ACM-SIAM Symp. on Disc. Alg. (SODA), pages 992–1004, 2010. (or is this one distributed again?\ju{dont think so. I dont see why we care about this paper.})

\section{Warm-up: When Variables Affect at Most Two Events}
\label{sec:edgecase}

In this section, we discuss our results for the setting where each random variable affects at most two events.
In other words, each random variable of the given LLL instance belongs to one edge in the dependency graph.
Note that we can assume that there is \emph{exactly} one variable per edge: the definition of the dependency graph ensures in this setting that there is at least one variable per edge, and if an edge is associated with more than one random variable we can encode these random variables in one new random variable.
We start by proving the following statement.
\subteclll*
\begin{proof}
	Our sequential process does not depend on the order in which we fix the random variables; in fact our algorithm still works if an (even adaptive) adversary chooses the order in which we have to fix the random variables.
	Let $X$ be a random variable chosen by this adversary, let $e = \{u, v\}$ be the edge associated with $X$, let $\badev_u$ and $\badev_v$ be the two bad events associated with $u$ and $v$, and let $X_1, \dots X_z$ be the random variables that are affecting at least one of $\badev_u, \badev_v$ and have already been fixed, say, to values $x_1, \dots, x_z$. 
	Let $p_1, \dots, p_k$ denote the probabilities with which $X$ assumes the $k$ possible values $y_1, \dots, y_k$, respectively.

	We claim that there exists a value $y_i$ for $X$ such that, by fixing $X=y_i$, the increase for the probability that $\badev_u$ occurs and the increase for the probability that $\badev_v$ occurs add up to at  most $2$. Introduce the term $\theta=\bigwedge_{i=1}^z(X_i=x_i)$;
	then, in other words, we want to show that there exists a value $y_i$ for $X$ such that
	\begin{align*}
		& \frac{\pr[\badev_u \mid \theta, X=y_i]}{\pr[\badev_u \mid \theta]} 
		%& \phantom{++++} 
		+ \frac{\pr[\badev_v \mid \theta, X=y_i]}{\pr[\badev_v \mid \theta]} \leq 2 \enspace.
	\end{align*}
	Suppose our claim is false.	Then we have
	 \begin{align*}
		2 & = \sum_{i=1}^k p_i \cdot 2 \\
		& < \sum_{i=1}^k \left( p_i \cdot \left( \frac{\pr[\badev_u \mid \theta, X=y_i]}{\pr[\badev_u \mid \theta]} + \frac{\pr[\badev_v \mid \theta, X=y_i]}{\pr[\badev_v \mid \theta]} \right) \right) \\
		& = \frac{\sum_{i=1}^k \left( p_i \cdot \pr[\badev_u \mid \theta, X=y_i] \right)}{\pr[\badev_u \mid \theta]} 
		% \phantom{=i}
		+ \frac{\sum_{i=1}^k \left( p_i \cdot \pr[\badev_v \mid \theta, X=y_i] \right)}{\pr[\badev_v \mid \theta]} \\
		& = \frac{\pr[\badev_u \mid \theta]}{\pr[\badev_u \mid \theta]} + \frac{\pr[\badev_v \mid \theta]}{\pr[\badev_v \mid \theta]} 
		 = 2 \enspace,
	\end{align*}
	yielding a contradiction.
	Hence, the claim is true; in particular, we can choose a value for our random variable $X$ such that the probability that $\badev_u$ occurs increases by a factor of at most $2$ and the same holds for $\badev_v$.
	Imagine that we choose a value with this property for each random variable.
	Then, after all random variables are fixed, the total increase for the probability of a bad event $\badev_v$ to occur is upper-bounded by $2^{\deg(v)} \leq 2^d$ because each random variable associated with an edge incident to $v$ supplies an increase of at most $2$ and all other random variables do not change the probability of $\badev_v$ to occur.
	Since, in the beginning, $\pr[\badev_v]  < 2^{-d}$, the probability of $\badev$ to occur after fixing all random variables in the described way is strictly smaller than $1$ and must therefore be $0$.
	Hence, none of the bad events occurs.
\end{proof}

\begin{proof}[Proof of \Cref{cor:rank2}]
As the variables in the proof of \Cref{thm:subTechnicalLLL} can be fixed simultaneously if they do not influence the same event, we can parallelize the algorithm with an edge-coloring of the dependency graph---recall that there is one variable on each of the edges of the dependency graph. An $O(d)$-edge-coloring of the dependency graph can be computed in $O(d+\logstar n)$ rounds and then we can iterate through the color classes in $O(d)$ rounds and fix all variables.
\end{proof}

\section{An Algorithm for $r=3$}

\label{sec: upper}
In this section we show our main result, a sequential and local process to fix the variables of an LLL with exponential criterion $p<2^{-d}$ where each variable affects at most three events. 
We use \Cref{sec:outl,sec:triple,sec:mainP} for its proof. 
In \Cref{sec:mainCor}, we show how this leads to an $O(d^2+\logstar n)$ distributed algorithm for the LLL problem.
Before outlining our proof, we introduce some necessary notation.

Let $H = (V,F)$ be the hypergraph defined on the same node set as the dependency graph, where, for each variable in the LLL instance, we have one hyperedge connecting exactly the nodes from $V$ that depend on the variable.
%We denote the maximum degree of $H$ (i.e., the maximum number of hyperedges a node in $H$ is contained in) by $\hdeg$.
The \emph{rank} of a hypergraph is the cardinality of the largest hyperedge.
In this section, we focus on the case that $H$ has rank at most $3$, that is, any variable influences at most three events. 
If a hyperedge has cardinality $k$, then we call the associated random variable a \emph{rank $k$} variable.

Throughout this section, we will assume that all considered random variables are of rank \emph{exactly} $3$.
This does not lose generality, due to the following observation:
If the random variable $Y$ under consideration is of rank $2$, then we can simply extend $Y$ to a rank $3$ random variable by adding a virtual third affected bad event, where each choice for the value of $Y$ does not change the probability of the new bad event to occur.

\subsection{Proof Outline}
\label{sec:outl}
We will show that it is possible to fix values for the random variables in a completely arbitrary order.
Moreover, the value a random variable is fixed to only depends on the current state of the radius-$1$ neighborhood of the random variable.
While it is not obvious at all that fixing random variables in the local way and arbitrary order described above is possible, we show that this is indeed the case, by designing a property which we call $P^*$ such that 
\begin{enumerate}[label=\arabic*)]
\item for each input graph (including already fixed random variables) that satisfies $P^*$, and each random variable that has not been fixed yet, there is a value for the random variable such that fixing the random variable to this value will preserve $P^*$, and 
\item after all random variables are fixed (by subsequently choosing values as promised by 1) ), the $P^*$ guarantees that none of the bad events occurs.
\end{enumerate}
The most challenging part of our work is to come up with such a property and to prove that it satisfies 1) and 2).

We continue with describing property $P^*$ more detailed. During the process of fixing the random variables, we keep track, for each event, how favorable or unfavorable the decisions made so far were.
This is done in a peculiar way:
On each edge $e=\{u,v\}$ of the dependency graph we maintain two non-negative values $e_u$ and $e_v$ with $e_u+e_v\leq 2$, that is, we have one value for each endpoint.
Initially all of these values are set to $1$.
Apart from the sum being at most $2$, these values measure the favorability and unfavorability of previous decisions in the following way: For each node $v$, the probability of the associated bad event to occur---in the probability space spanned by all still unfixed variables---is always upper-bounded by $p$ times the product of the values written on the incident edges on the side of $v$, that is, the probability of the bad event at $v$ is always upper bounded by $p\cdot \Pi_{v\in e} e_v$. At the start, this  is trivially satisfied as this product equals $1$. We say \emph{property $P^*$} holds at some point during our algorithm if the values on the edges are such that the aforementioned conditions hold. We will soon give some intuition on why we can always find these values for the edges when fixing a variable, but let us first show that 2) holds: At the end of the process, when all variables are fixed, we obtain that this product is at most $2^{\gdeg}$ as there are at most $\gdeg$ incident edges in the dependency graph and each value on an edge can be at most $2$. When all random variables are fixed, then, for each bad event, the probability that it occurs is at most $p \cdot \Pi_{v\in e}e_v\leq p\cdot 2^{\gdeg} < 1$, which implies that none of the bad events occurs and 2) holds.

While this explains why it is desirable to define $P^*$ in the aforementioned way (in particular, why the ``bookkeeping'' is performed on the edges of the dependency graph), it is far from clear why, when fixing a random variable, a value for the variable as promised in 1) exists.

To shed some light on the intuitive reason for this existence, recall the case where variables affect at most two events, and assume that we want to fix variable $X$ on edge $e=\{u,v\}$.
%and for simplicity assume that there are only two choices for $X$.
A choice for $X$ is \emph{good} if the induced increases of the probabilities that the bad events $\mathcal{E}_u$ and $\mathcal{E}_v$ occur are representable on the single edge $e$ between $u$ and $v$ in the dependency graph.
For $r=2$, \emph{representable} means that the sum of the increases is at most $2$. In \Cref{sec:edgecase}, we formally show that such a choice always exists, mostly due to  linearity of expectation.
For $r=3$, let $\{u,v,w\}$ share a hyperedge on which we want to fix a variable $Y$.
Again, we want to show that there is a choice for $Y$ such that the increases of the probabilities that the events $\mathcal{E}_u,\mathcal{E}_v$ and $\mathcal{E}_w$ occur can be represented on the three edges between $u,v$ and $w$ in the dependency graph.
However, in contrast to rank $2$, the condition on whether a triple $(a,b,c)$ of increases can be represented is much more delicate and relies on  nonlinear relations between the values $a,b$ and $c$.
Thus, for rank $3$ we cannot deduce the existence of a good choice immediately from the linearity of expectation.

To better understand our approach for rank $3$, let us look at the case of rank $2$ from a slightly different point of view, where we do not encode different random variables corresponding to the same edge in one new random variable, but instead allow those random variables to be processed individually. 
%%Assume that the two choices of $X$ happen with probability $p_0$ and $p_1=1-p_0$, respectively.
%Before fixing variable $X$ the situation on edge $e$ is representable either because $X$ is the first variable that we fix on the edge or because some other variable that has been fixed on $e$ earlier has left $e$ in a representable status. Now, by the law of total probability, the increase values before fixing $X$ are a weighted average of the increase values for both choices of $X$. If both choices for $X$ were not good, that is, the respective increase tuples are both not representable, i.e., for both choices the increase values sum up to more than $2$, then also the increase values before fixing $X$ sum up to more than $2$ and are not representable, a contradiction.  The proof can be reduced to the set of representable tuples satisfying the following property: For any two non representable tuples their weighted average is also not representable. We call a set with this property \emph{incurved} and show  that the set of representable triples is incurved.
%
%
%\textbf{HERE THE OLD ALTERNATIVE PARAGRAPH STARTS}
%
%While this explains why it is desirable to define our property in the aforementioned way (in particular, why the ``bookkeeping'' is performed on the edges of the dependency graph), it is far from clear why, when fixing a random variable, a value for the variable as promised in 1) exists.
%
%To shed some light on the intuitive reason for this existence, recall the case where variables affect at most two events.
As explained in \Cref{sec:edgecase} and elaborated upon above, it is always possible to find a value for the random variable corresponding to some edge of the dependency graph such that, for the two endpoints, the increases in the probabilities that the respective bad events occur add up to at most $2$.
As it turns out, an even stronger existential statement of this kind is true: If we have to fix several random variables that all correspond to the same edge, we can do so sequentially, each time ensuring that the so-far obtained \emph{total} increases for the probabilities of the affected two events add up to at most $2$.
In other words, for any two given non-negative values $s,t$ with $s + t \leq 2$, we can find a value for any given random variable in a way that ensures that $s$ times the induced probability increase of the first affected bad event plus $t$ times the induced probability increase of the seconded affected bad event is at most $2$.
This statement can be seen as a weighted version of the original statement from \Cref{sec:edgecase}, and, as we will see, both carry over to the case $r=3$.
%\ym{(Some more intuition why we do things the way we do them?)}
We next define  property $P^*$ formally. 
\begin{definition}[Property $P^*$]
Let $G$ be the dependency graph of an LLL instance with bad events $\mathcal{E}_1, \ldots, \mathcal{E}_n$, variables $X_1,\ldots, X_m$ and assume that some of the random variables, say, $X_1, \dots, X_z$, have already been fixed to values $x_1, \dots, x_z$, respectively.
Let $\varphi$ be a function that maps each pair $(e,v) \in E \times V$, where $v$ is an endpoint of $e$, to a value $\varphi_e^v \in [0,2]$.
We say that $(G, \varphi)$ satisfies \emph{property $P^*$} if
\begin{enumerate}
	\item $\varphi_e^v + \varphi_e^u \leq 2$ for all $e = \{ u, v \} \in E$, and
	\item $\pr[\badev_v \mid X_1 = x_1, \dots, X_z = x_z] \leq p \cdot \prod_{e \ni v} \varphi_e^v$ for all $v \in V$.
\end{enumerate}
\end{definition}
%For simplicity, we might write $G$ instead of $(G, \varphi)$.

The main technical ingredient for proving our main result is the following lemma.

\begin{lemma}[Variable Fixing Lemma]
\label{lem:main}
	Let $X_1, \dots, X_z$ be the random variables that have been fixed so far and $X$ an arbitrary random variable that has not been fixed yet.
	Assume that $(G, \varphi)$ satisfies $P^*$.
	Then there is a value $x$ that $X$ can assume and a function $\psi : \{(e,v) \in E \times V \mid v \in e\} \to [0,2]$ such that $(G, \psi)$ satisfies $P^*$.
	Moreover, if $u$ is a node not contained in the hyperedge in $H$ associated with $X$, then for any edge $e = \{u, w\} \in E$ we have $\psi_e^u=\varphi_e^u $ and $\psi_e^w=\varphi_e^w $.
\end{lemma}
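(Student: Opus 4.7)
Let $h$ be the hyperedge of $H$ associated with $X$. By the convention from the beginning of this section we may assume $|h|=3$; write $h=\{u,v,w\}$ and let $e_{uv}, e_{uw}, e_{vw}$ denote the corresponding edges of the dependency graph $G$. For every edge $e=\{a,b\}\notin\{e_{uv}, e_{uw}, e_{vw}\}$ I set $\psi_e^a:=\varphi_e^a$ and $\psi_e^b:=\varphi_e^b$. This takes care of the ``moreover'' clause and of Condition~1 of $P^*$ outside of $h$, and since for every node $v'\notin h$ the event $\mathcal{E}_{v'}$ is independent of $X$ given the history (and all $\psi$-values around $v'$ equal their old $\varphi$-values), Condition~2 is preserved at every such $v'$. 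The remaining task is to pick a value $x$ for $X$ together with six values $\psi_{e_{uv}}^u, \psi_{e_{uv}}^v, \psi_{e_{uw}}^u, \psi_{e_{uw}}^w, \psi_{e_{vw}}^v, \psi_{e_{vw}}^w\in[0,2]$ that preserve Conditions~1 and~2 of $P^*$ at $u$, $v$, and $w$.

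Set $\theta:=\bigwedge_{i=1}^{z}(X_i=x_i)$ and let $\{y_j\}_j$ be the support of $X$ with $p_j:=\pr[X=y_j\mid \theta]$. Define the probability-increase ratios
\[
  \alpha_j:=\frac{\pr[\mathcal{E}_u\mid\theta,X=y_j]}{\pr[\mathcal{E}_u\mid\theta]},\quad \beta_j:=\frac{\pr[\mathcal{E}_v\mid\theta,X=y_j]}{\pr[\mathcal{E}_v\mid\theta]},\quad \gamma_j:=\frac{\pr[\mathcal{E}_w\mid\theta,X=y_j]}{\pr[\mathcal{E}_w\mid\theta]}
\]
(denominators of $0$ trivialize the corresponding constraint). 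The law of total probability yields $\sum_j p_j\alpha_j=\sum_j p_j\beta_j=\sum_j p_j\gamma_j=1$. Call a triple $(\alpha,\beta,\gamma)\in\mathbb{R}_{\ge 0}^3$ \emph{representable} (with respect to the current $\varphi$) if there exist values as above with the two endpoint values on each of the three edges summing to at most $2$ and satisfying
\[
  \psi_{e_{uv}}^u\,\psi_{e_{uw}}^u\ge\alpha\,\varphi_{e_{uv}}^u\,\varphi_{e_{uw}}^u,\quad \psi_{e_{uv}}^v\,\psi_{e_{vw}}^v\ge\beta\,\varphi_{e_{uv}}^v\,\varphi_{e_{vw}}^v,\quad \psi_{e_{uw}}^w\,\psi_{e_{vw}}^w\ge\gamma\,\varphi_{e_{uw}}^w\,\varphi_{e_{vw}}^w.
\]
Note that $(1,1,1)$ is representable via $\psi:=\varphi$, and that a representable $(\alpha_j,\beta_j,\gamma_j)$ certifies that setting $X:=y_j$ together with the witnessing $\psi$ preserves $P^*$, since the factor $\alpha_j$ exactly accounts for the increase in the left-hand side of Condition~2 at $u$, and likewise for $v$ and $w$.

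The task thus reduces to exhibiting a $j$ such that $(\alpha_j,\beta_j,\gamma_j)$ is representable. My plan is to capture representability by a convex inequality: I would construct a function $F:\mathbb{R}_{\ge 0}^3\to\mathbb{R}$, depending only on the six $\varphi$-values on $e_{uv}, e_{uw}, e_{vw}$, such that (i) $(\alpha,\beta,\gamma)$ is representable iff $F(\alpha,\beta,\gamma)\ge 0$, (ii) $F$ is convex, and (iii) $F(1,1,1)\ge 0$ (automatic from (i) once $(1,1,1)$ is representable). Then Jensen's inequality on the distribution $\{p_j\}$ gives
\[
  \max_j F(\alpha_j,\beta_j,\gamma_j)\,\ge\,\sum_j p_j\,F(\alpha_j,\beta_j,\gamma_j)\,\ge\,F\!\left(\textstyle\sum_j p_j\,(\alpha_j,\beta_j,\gamma_j)\right)=F(1,1,1)\,\ge\,0,
\]
so some $y_j$ witnesses a representable triple, which completes the proof.

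The main obstacle is the construction of $F$ and the verification of its convexity. To construct $F$ I would observe that at any extremal $\psi$ each edge-sum constraint is tight (increasing a $\psi$-value only relaxes node constraints), reducing the problem to three scalar unknowns -- one per edge -- each appearing in exactly two bilinear node inequalities. Eliminating those three unknowns by arithmetic-geometric / Lagrangian manipulations should leave a single inequality in $(\alpha,\beta,\gamma)$ and the six $\varphi$-values whose left-hand side is the desired $F$. The remarks in Section~1 suggest that for $r=3$ the resulting $F$ admits a clean closed form whose convexity reduces to a tractable algebraic verification, while for $r>3$ the analogous derivation no longer yields such a clean form -- this is precisely the step of the framework that resists generalization.
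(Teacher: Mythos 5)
Your reduction is the same one the paper uses: freeze $\psi=\varphi$ outside the hyperedge of $X$ (which settles the ``moreover'' clause and property $P^*$ at all nodes not in the hyperedge), pass to the increase ratios $\alpha_j,\beta_j,\gamma_j$ whose $p_j$-weighted averages are $1$ by total probability, observe that it suffices to find one $y_j$ whose ratio triple is ``representable'' relative to the current $\varphi$-values, and conclude by an averaging argument exploiting a convexity property of the representable region. Up to packaging (Jensen on a certifying function $F$ versus the paper's notion of an \incurved{} set together with the discrete averaging argument of \Cref{lem:bad}), this is the paper's proof skeleton, and the parts you do spell out (handling zero denominators, why the ratio condition restores subproperty~(2) at $u,v,w$, why $(1,1,1)$ is representable) are correct.

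However, there is a genuine gap: the entire technical content of the lemma is the claim that the representable region has the convexity property your Jensen step needs, and you leave exactly that unproved, offering only a plan (``eliminating those three unknowns by arithmetic-geometric / Lagrangian manipulations should leave a single inequality...whose left-hand side is the desired $F$''). In the paper this is the hard part: one must first derive the closed-form boundary, $\rep = \{(a,b,c)\in\R_{\geq 0}^3 \mid a+b\leq 4,\; c\leq f(a,b)\}$ with $f(a,b)=4+\tfrac12\bigl(ab-2a-2b-\sqrt{ab(4-a)(4-b)}\bigr)$ (\Cref{lem:ess}), and then verify convexity of $f$ via its Hessian (\Cref{lem:convex}); only then does incurvedness of $\rep$ follow (\Cref{lem:incurved}). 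Moreover, your specific plan of a \emph{single} convex closed-form $F$ with $\{F\geq 0\}$ equal to the representable set does not go through as stated: the set is cut out by \emph{two} conditions, and $f(a,b)-c$ is convex (indeed, real-valued) only on $\{a+b\leq 4\}$ — e.g.\ $f(4,4)=4$, so the naive $F=f(a,b)-c$ would wrongly certify non-representable triples with $a+b>4$, and the ratio triples $(\alpha_j,\beta_j,\gamma_j)$ you feed into Jensen can land in that region even though their average does not. One can repair this (the paper does so by replacing Jensen with the incurvedness of $\rep$, i.e.\ convexity of its complement in the octant, proved by a case analysis on $a+b\lessgtr 4$, plus the explicit extraction of a violating pair $s,s'$ in \Cref{lem:bad}), but as written your proposal asserts rather than establishes the key analytic fact, which is precisely the step the authors identify as the obstacle to generalizing beyond $r=3$.
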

We postpone the proof of \Cref{lem:main} to \Cref{sec:mainP} and first show that using \Cref{lem:main} proving our main theorem is straightforward.
\mainLLL*
%\begin{proof}[Proof of \Cref{thm:mainTechnicalLLL}]
\begin{proof}
	In the proof of \Cref{thm:subTechnicalLLL}, we could essentially fix each random variable without using any other information than which values were assigned to the adjacent already fixed random variables.
	In the current proof, during the process of fixing the random variables, we will also maintain a function $\varphi : \{(e,v) \in E \times V \mid v \in e\} \to [0,2]$ such that, at all times, $(G, \varphi)$ satisfies property $P^*$.
	Define $B := \{(e,v) \in E \times V \mid v \in e\}$.
	In the beginning, before fixing any random variable, we set $\varphi_e^v := 1$, for all $(e,v) \in B$, thereby ensuring that $(G, \varphi)$ indeed satisfies property $P^*$.

	%As in the proof of \Cref{thm:subTechnicalLLL}, we will explain how we fix the value of an adversarially chosen random variable.
	%Moreover, we will also describe how we update the function $\varphi$ when fixing a random variable.
	Now \Cref{lem:main} asserts that there is a value for any adversarially chosen random variable and an update to our function $\varphi$ such that after assigning this value and updating $\varphi$ accordingly, $(G, \varphi)$ still satisfies property $P^*$.
	Let all random variables be fixed according to this scheme.
	Then, after fixing the random variables, property $P^*$ ensures that the probability of some specific bad event $\badev_v$ to occur is upper-bounded by
	\[
		p \cdot \prod_{e \ni v} \varphi_e^v < 2^{-d} \cdot \prod_{e \ni v} 2 \leq 1 \enspace.
	\]
	Hence, for each bad event, the probability that it occurs is $0$.
\end{proof}

\subsection{Representable Triples}
\label{sec:triple}
The first subproperty of property $P^*$ naturally gives rise to the notion of a very useful class of objects which we call representable triples.
\begin{definition}[Representable triples] \label{def:rep}
A triple $(a,b,c) \in \R_{\geq 0}^3$ is called \emph{representable} if there are values $a_1, a_2, b_1, b_3, c_2, c_3 \in [0,2]$ such that $a_1 \cdot a_2 = a$, $b_1 \cdot b_3 = b$, $c_2 \cdot c_3 = c$, $a_1 + b_1 \leq 2$, $a_2 + c_2 \leq 2$, and $b_3 + c_3 \leq 2$.
Let $\rep=\{(a,b,c)\in \R_{\geq 0}^3 \mid (a,b,c)\text{ is representable}\}$ denote the set of all representable triples.
\end{definition}
\begin{figure}
	\centering
	\includegraphics[width=0.5\textwidth]{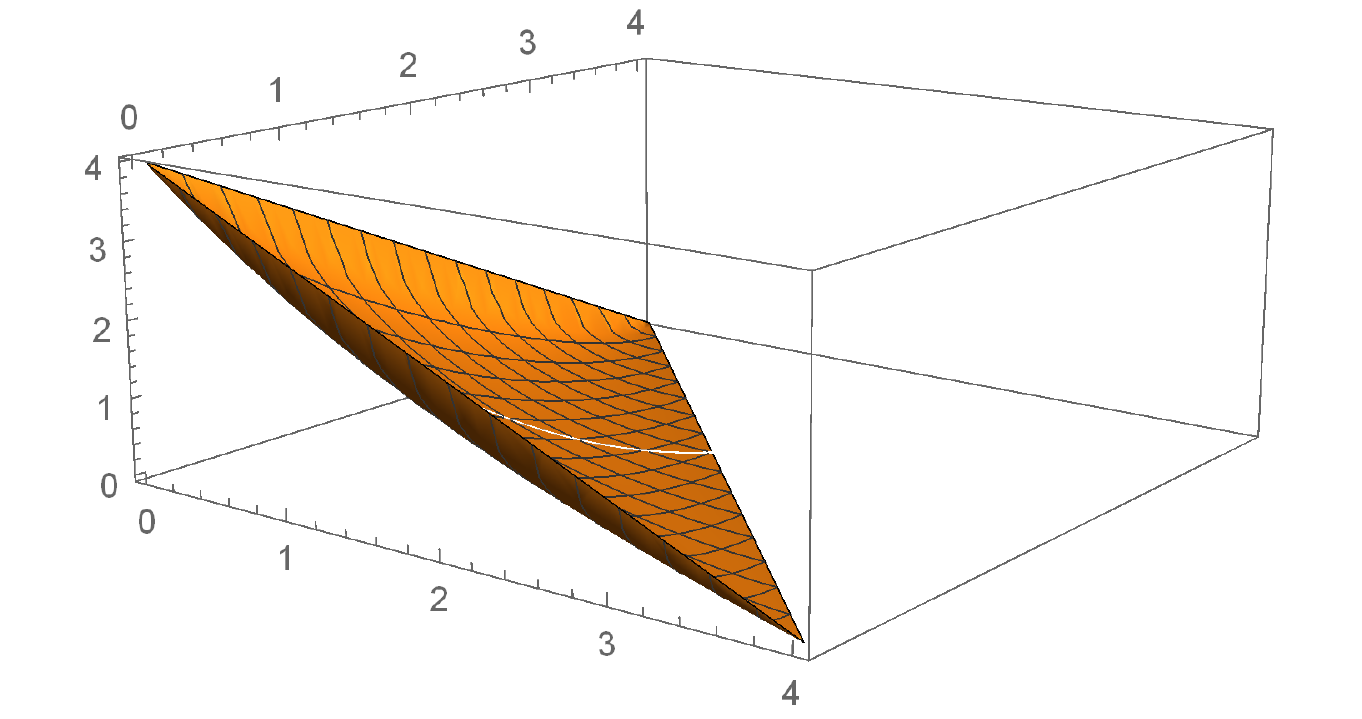}
	\caption{The set ${ \rep}$ of representable triples consists of all points in the first octant that are below the orange shaded surface. The reader might use the plot to convince himself of the fact that the set is incurved, that is, if two points ${s,s'\in\R_{\geq 0}^3}$  are not in ${  \rep}$ then also the line connecting them does not intersect with the set ${\rep}$. }
	\label{fig:Srep}
\end{figure}
See \Cref{fig:reppi} for an illustration of a representable triple and \Cref{fig:Srep} for an illustration of the set $\rep$.
Note that in any representable triple $(a,b,c)$ we have $a,b,c\leq 4$.
\begin{figure}
	\centering
	\includegraphics[width=0.5\textwidth]{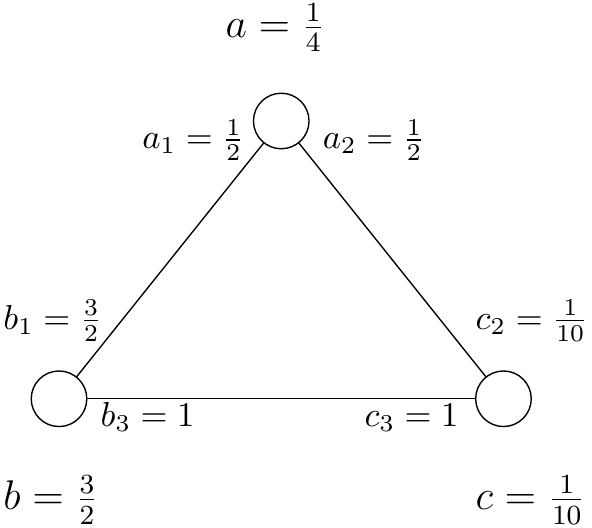}
	\caption{A quick calculation verifies that since $a_1 + b_1 \leq 2$, $a_2 + c_2 \leq 2$, and $b_3 + c_3 \leq 2$, the illustrated triple $(a, b, c) = \left( \frac{1}{4}, \frac{3}{2}, \frac{1}{10} \right)$ is representable.}
	\label{fig:reppi}
\end{figure}
Before we can shed light on how we will use representable triples to obtain the desired result, we need to define another concept that is closely related to convexity.
\begin{definition}[\incurved] We say that a set $S\subseteq \R_{\geq 0}^3$ is \emph{\incurved} if there are no $q \in [0,1]$, $s, s' \in \R_{\geq 0}^3 \setminus S$ such that $q \cdot s + (1-q) \cdot s' \in S$~.
\end{definition}
Now we are set to give a high-level overview of our approach to prove \Cref{lem:main}.
The ``representable triples'' from \Cref{sec:triple} formalize both the situation at the beginning of the variable fixing performed in \Cref{lem:main} and the desired outcome, or more precisely, the part of the situation (resp.\ outcome) corresponding to subproperty (1) of property $P^*$: before fixing the variable in question, the values given by $\varphi$ add up to at most $2$ on each edge, hence the corresponding triple is representable, and \Cref{lem:main} assures that the same holds for the values given by $\psi$ after the fixing.
The situation before the fixing is worst-case if the respective representable triple lies on the surface bounding the set of all representable triples and if we can find a new representable triple (as the desired outcome, together with a suitable value for the random variable), then we can also find one that lies on this surface.
Hence, we can reduce our considerations to this surface.
It turns out that the question whether there exists a suitable value for the random variable to be fixed in \Cref{lem:main} depends on the shape of this surface: if there is no such suitable value, then the function describing the surface is not convex. 
% This is not a coincendence. Intuitively the convexity of the surface means that if all choices for the random variable are not suitable, then also any weighted average of the choices is not suitable. However, the situation before variable $X$ is fixed is by assumption suitbale and can be written as a weighted average of the different choices for $X$.
However, \Cref{lem:ess} and \Cref{lem:convex} show that this function is convex, thereby ensuring that a suitable value exists, which essentially proves \Cref{lem:main}. For technical reasons, we use the concept of incurvedness (instead of the closely related convexity) for the proof of the relation between the existence of a suitable value for the random variable and the shape of the surface (\Cref{lem:bad}), and show that the convexity of the surface-describing function implies the incurvedness of the set $\rep$ that is bounded by the function (\Cref{lem:incurved}). 
We encourage the reader to consult the plot given in \Cref{fig:Srep} to convince her- or himself that $\rep$ is incurved.
%, a fact, that we will later formally proof.

%We start by giving an alternative description of the set $\rep$, mainly through a function $f(a,b)$ that tightly upper-bounds the third coordinate $c$ of a triple in $\rep$ in terms of the first and the second coordinate (cf. \Cref{lem:ess}). Then we show that the function $f(\cdot,\cdot)$ is convex on a suitable domain (cf. \Cref{lem:convex}) which will imply that $\rep$ is \incurved~(cf. \Cref{lem:incurved}).
%We encourage the reader to consult the plot \Cref{fig:Srep} to convince himself that $\rep$ is incurved, a fact, that we will later formally proof.
%The insight that $\rep$ is incurved will then yield the proof of the variable fixing lemma (Proof of \Cref{lem:main} in \Cref{sec:mainP}). 

\begin{lemma}\label{lem:ess}
	\emph{(Proof Deferred to the Appendix)}
	For $f(a,b) := 4 + 1/2 \cdot \big(ab - 2a - 2b - \sqrt{ab(4-a)(4-b)}\big)$ we obtain
	\[\rep = \{ (a, b, c) \subseteq \R_{\geq 0}^3 \mid a + b \leq 4, c \leq f(a,b) \}~.\]
\end{lemma}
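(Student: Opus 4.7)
The plan is to reduce the representability question for a fixed pair $(a,b)$ to a one-variable constrained optimization, solve it via Lagrange multipliers, and read off the maximum as $f(a,b)$. The first step is the observation that $(a,b,c)$ is representable if and only if there exist $a_1 \in [a/2, 2]$ and $b_1 \in [b/2, 2]$ with $a_1 + b_1 \leq 2$ such that $c \leq (2 - a/a_1)(2 - b/b_1)$: once $a_1, b_1$ are fixed, $a_2 = a/a_1$ and $b_3 = b/b_1$ are forced, and then $c_2 c_3$ attains every value in $[0, (2 - a_2)(2 - b_3)]$ as $c_2 \in [0, 2 - a_2]$ and $c_3 \in [0, 2 - b_3]$ vary. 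The mere existence of a feasible $(a_1, b_1)$ already forces $a/2 + b/2 \leq a_1 + b_1 \leq 2$, giving $a + b \leq 4$, so it remains to show that the maximum of $(2 - a/a_1)(2 - b/b_1)$ over the feasible region equals $f(a,b)$.

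Next, I would argue that this maximum is attained on the diagonal $a_1 + b_1 = 2$, since $\frac{\partial}{\partial a_1}[(2 - a/a_1)(2 - b/b_1)] = (a/a_1^2)(2 - b/b_1) \geq 0$ (and symmetrically for $b_1$), so one can always push either coordinate outward without decreasing the objective until the sum constraint becomes tight. Setting $t = a_1$, the one-variable function $g(t) := (2 - a/t)(2 - b/(2-t))$ is continuous on $[a/2, 2 - b/2]$, vanishes at both endpoints, and is strictly positive on the interior whenever $a + b < 4$; hence its maximum is attained at an interior critical point.

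The heart of the proof is the critical-point calculation. Under the substitution $x = 2 - a/t$, $y = 2 - b/(2-t)$ the constraint $a/t + b/(2-t) = 2$ rearranges to $(4 - b)\,x + (4 - a)\,y - 2xy = 8 - 2a - 2b$, and maximizing $xy$ via Lagrange multipliers yields the clean ratio condition $(4 - a)\,y = (4 - b)\,x$. Eliminating $y$ produces a quadratic in $x$ whose discriminant, using the identity $(4 - a)(4 - b) - 4(4 - a - b) = ab$, collapses to $ab(4 - a)(4 - b)$. A short expansion then shows the two critical values are
\[
xy \;=\; 4 \;+\; \tfrac{1}{2}\bigl(ab - 2a - 2b \;\pm\; \sqrt{ab(4 - a)(4 - b)}\bigr).
\]
Comparing $x$ against the upper bound $2(4 - a - b)/(4 - b)$ imposed by $t \leq 2 - b/2$ shows that only the ``$-$'' branch corresponds to a $t$ in the feasible interior; the ``$+$'' branch is a spurious algebraic root outside the range. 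Combined with the vanishing of $g$ at the endpoints, this identifies $f(a,b)$ as the maximum.

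Finally, I would dispatch the degenerate cases $a = 0$ or $b = 0$ by a direct construction -- e.g.\ for $a = 0$, taking $a_1 = a_2 = 0$, $b_1 = 2$, $b_3 = b/2$, $c_2 = 2$, $c_3 = 2 - b/2$ realizes $c = 4 - b = f(0,b)$ -- and for the ``$\supseteq$'' direction at arbitrary $c \leq f(a, b)$ one takes the critical $a_1, b_1$ identified above and sets $c_2 = 2 - a_2$, $c_3 = c/c_2$, which lies in $[0, 2 - b_3]$ precisely because $c \leq (2 - a_2)(2 - b_3) = f(a,b)$. The main obstacle I anticipate is the feasibility check isolating the correct branch of the quadratic: both algebraic roots appear on equal footing in the closed-form expression, and verifying that exactly one of them lies in the feasible interior (and delivers the genuine maximizer, not a spurious stationary point) requires some careful algebra that hinges once again on the same identity $(4-a)(4-b) - 4(4-a-b) = ab$.
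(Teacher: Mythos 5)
Your proposal is correct and follows essentially the same route as the paper's appendix proof: reduce to the pair $(a_1,b_1)$ with the sum constraint tight, maximize the one-variable product $(2-a/t)(2-b/(2-t))$ on $[a/2,\,2-b/2]$ via its interior critical point, discard the spurious root, handle $a=0$ or $b=0$ directly, and get the converse inclusion by shrinking $c_2$ or $c_3$. The only real difference is executional: your substitution to the slack variables $x=2-a/t$, $y=2-b/(2-t)$ with the Lagrange ratio $(4-a)y=(4-b)x$ and the identity $(4-a)(4-b)-4(4-a-b)=ab$ yields a non-degenerate quadratic and a shorter computation of the critical value, avoiding the paper's separate $a=b$ case and its lengthy radical simplification of $c(x_1)$.
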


\begin{lemma}\label{lem:convex}
	\emph{(Proof deferred to the Appendix)}
	The function $f(a,b) := 4 + 1/2 \cdot (ab - 2a - 2b - \sqrt{ab(4-a)(4-b)})$ is convex on $\{ (a,b) \in \R_{\geq 0}^2 \mid a + b \leq 4 \}$.
\end{lemma}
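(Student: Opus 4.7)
The plan is to verify convexity directly by computing the Hessian of $f$ on the interior of its domain and showing it is positive semi-definite, then extending to the boundary by continuity.

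To isolate the non-polynomial part, I would write $u(a,b) := \sqrt{p(a)\, q(b)}$ with $p(a) := a(4-a)$ and $q(b) := b(4-b)$, so that $f = 4 + \tfrac{1}{2}(ab - 2a - 2b - u)$. The contribution of $\tfrac{1}{2}(ab - 2a - 2b)$ to the Hessian is $\bigl(\begin{smallmatrix} 0 & 1/2 \\ 1/2 & 0 \end{smallmatrix}\bigr)$, so
\begin{align*}
\nabla^2 f \;=\; \tfrac{1}{2}\begin{pmatrix} -u_{aa} & 1 - u_{ab} \\ 1 - u_{ab} & -u_{bb} \end{pmatrix}.
\end{align*}
Positive semi-definiteness thus reduces to three inequalities: $u_{aa} \leq 0$, $u_{bb} \leq 0$, and the determinant condition $u_{aa}\,u_{bb} \geq (1 - u_{ab})^{2}$.

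The factored form $u^{2} = p(a)\,q(b)$ makes the second partials clean to evaluate. Using $p'' \equiv q'' \equiv -2$, direct differentiation and simplification yield
\begin{align*}
u_{aa} \;=\; -\frac{4\,q(b)}{u\,p(a)}, \qquad u_{bb} \;=\; -\frac{4\,p(a)}{u\,q(b)}, \qquad u_{ab} \;=\; \frac{p'(a)\,q'(b)}{4u} \;=\; \frac{(2-a)(2-b)}{u}.
\end{align*}
The first two expressions are strictly negative on the interior (where $u, p(a), q(b) > 0$), disposing of the diagonal sign conditions; and their product collapses neatly to $u_{aa}\,u_{bb} = 16/u^{2}$. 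The determinant inequality therefore becomes precisely $\bigl|\, u - (2-a)(2-b)\,\bigr| \leq 4$.

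This last bound is the main obstacle, and it is where the convenient quadratic structure of $p$ and $q$ really earns its keep. I would prove it via the trigonometric substitution $a = 2 + 2\cos\alpha$, $b = 2 + 2\cos\beta$ with $\alpha,\beta \in [0,\pi]$, which gives $p(a) = 4\sin^{2}\alpha$ and $q(b) = 4\sin^{2}\beta$, hence $u = 4\sin\alpha\sin\beta$ while $(2-a)(2-b) = 4\cos\alpha\cos\beta$. Consequently
\begin{align*}
u - (2-a)(2-b) \;=\; -4\bigl(\cos\alpha\cos\beta - \sin\alpha\sin\beta\bigr) \;=\; -4\cos(\alpha+\beta),
\end{align*}
whose absolute value is trivially at most $4$. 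This shows $\nabla^{2}f \succeq 0$ on the interior. Finally, since $f$ is continuous on the compact domain $\{a+b \leq 4\}\cap \R_{\geq 0}^{2}$ (with $u$ extending continuously to zero along $a=0$, $b=0$, and $a+b=4$), passing the secant inequality to the limit promotes convexity on the relative interior to convexity on the whole closed region, completing the proof.
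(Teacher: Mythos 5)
Your proof is correct, and it follows the same overall strategy as the paper -- compute the Hessian, establish positive semi-definiteness on the interior $U'=\{a,b>0,\ a+b<4\}$, and pass to the closed domain by continuity -- but the execution of the decisive step is genuinely different and considerably cleaner. The paper applies Sylvester's criterion and grinds through the algebra of $\frac{\partial^2 f}{\partial a^2}\frac{\partial^2 f}{\partial b^2}-\bigl(\frac{\partial^2 f}{\partial a\partial b}\bigr)^2$ until it collapses to $\frac{16-\bigl(\frac12(\sqrt{(4-a)(4-b)}-\sqrt{ab})^2-4\bigr)^2}{4ab(4-a)(4-b)}>0$, bounded via $0<(\sqrt{(4-a)(4-b)}-\sqrt{ab})^2<16$. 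You instead isolate $u=\sqrt{p(a)q(b)}$ with quadratic factors, exploit the identity $2p''p-(p')^2=-16$ to get the compact forms $u_{aa}=-4q/(up)$, $u_{bb}=-4p/(uq)$, $u_{ab}=(2-a)(2-b)/u$, reduce the determinant condition to $|u-(2-a)(2-b)|\le 4$, and dispatch it with the substitution $a=2+2\cos\alpha$, $b=2+2\cos\beta$, which turns the left-hand side into $|4\cos(\alpha+\beta)|$. These are in fact equivalent inequalities (the paper's bracketed expression equals $(2-a)(2-b)-u$), so your route buys a much shorter verification of the same key bound, at the cost of only establishing positive semi-definiteness rather than the paper's strict definiteness -- which is all convexity needs. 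Your second partials and the reduction agree with the paper's formulas, so the computation checks out. One small inaccuracy: along $a+b=4$ the function $u$ does not tend to zero (there $u=ab$); this is harmless, since all your continuity argument requires is that $u$, and hence $f$, is continuous on the closed triangle, which holds because $ab(4-a)(4-b)\ge 0$ there and $\sqrt{\cdot}$ is continuous on $[0,\infty)$.
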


The characterization of $\rep$ given in \Cref{lem:ess} together with the convexity of $f(a,b)$ from \Cref{lem:convex} imply that $\rep$ is \incurved.
The formal proof of this statement is given by \Cref{lem:incurved}.
Note that \Cref{lem:ess} and \Cref{lem:convex} are not crucial to our general approach if one can find a different proof to show that $\rep$ is \incurved. 
\begin{lemma}\label{lem:incurved}
$\rep$ is \incurved.
\end{lemma}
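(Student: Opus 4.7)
Note that ``$\rep$ is incurved'' is equivalent to ``$\R_{\geq 0}^3\setminus \rep$ is convex'', so my plan is to verify this convexity using the characterization $\rep=\{(a,b,c)\in \R_{\geq 0}^3 : a+b\leq 4,\ c\leq f(a,b)\}$ from \Cref{lem:ess}, together with the convexity of $f$ from \Cref{lem:convex}, via a short case analysis. Fix $s=(a_1,b_1,c_1)$, $s'=(a_2,b_2,c_2)\in \R_{\geq 0}^3\setminus \rep$ and $q\in(0,1)$ (the cases $q\in\{0,1\}$ are trivial); write $p=qs+(1-q)s'=(a,b,c)$, and aim to show $p\notin \rep$. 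Two preparatory facts about $f$ on the triangle $T:=\{(a,b)\in \R_{\geq 0}^2:a+b\leq 4\}$ will be useful: (i)~$f\geq 0$ on $T$, which after squaring reduces to the algebraic identity $(8-2u+v)^2-v(16-4u+v)=4(u-4)^2\geq 0$ with $u=a+b$, $v=ab$; and (ii)~$f(a,4-a)=0$ for $a\in[0,4]$, by direct substitution into the formula.

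I then split into cases according to whether $a_i+b_i\leq 4$ holds. If both sums are $\leq 4$, then $s,s'\notin \rep$ gives $c_i>f(a_i,b_i)$, $p$ lies in $T$, and the convexity of $f$ (\Cref{lem:convex}) yields
\[ c=qc_1+(1-q)c_2>qf(a_1,b_1)+(1-q)f(a_2,b_2)\geq f(a,b), \]
so $p\notin \rep$. If both sums exceed $4$, then $a+b>4$ already. The mixed case is the interesting one: WLOG $a_1+b_1>4\geq a_2+b_2$. If $a+b>4$ I am done; otherwise I parametrize the segment as $s(t)=(1-t)s+ts'$, let $t_0\in(0,1)$ be the unique $t$ with $a(t_0)+b(t_0)=4$, note that $p=s(1-q)$ with $1-q\in[t_0,1]$, and analyze $g(t):=c(t)-f(a(t),b(t))$ on $[t_0,1]$. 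This $g$ is concave, being the sum of an affine function and the negative of a convex composition. Fact~(ii) gives $g(t_0)=c(t_0)\geq t_0 c_2$, while fact~(i) together with $s'\notin \rep$ forces $c_2>f(a_2,b_2)\geq 0$; hence $g(t_0)>0$. Similarly $g(1)=c_2-f(a_2,b_2)>0$. Concavity on $[t_0,1]$ therefore gives $g(1-q)>0$, i.e., $c>f(a,b)$, so $p\notin \rep$.

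The main obstacle I anticipate is precisely the mixed case above: \Cref{lem:convex} only asserts convexity of $f$ on $T$, so the clean Jensen-type argument of the first case cannot be invoked directly once one endpoint of the segment leaves $T$. The remedy, as sketched, is to restrict attention to the sub-segment lying inside $T$ and to use the vanishing of $f$ on the hypotenuse of $T$ (fact~(ii)), together with the non-negativity of $f$ (fact~(i)) which forces $c_2$ to be strictly positive, in order to ensure strict positivity of $g$ at \emph{both} endpoints of this sub-segment; concavity of $g$ then closes the case.
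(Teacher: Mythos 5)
Your proof is correct and takes essentially the same route as the paper: the same case split on whether the projections of the two endpoints satisfy $a+b\leq 4$, with the main case settled exactly as in the paper by combining the characterization of $\rep$ from \Cref{lem:ess} with the convexity of $f$ from \Cref{lem:convex}. Your treatment of the mixed case---restricting to the sub-segment whose projection lies in the triangle and using concavity of $g(t)=c(t)-f(a(t),b(t))$ together with the explicit facts $f\geq 0$ and $f(a,4-a)=0$---is a slightly more explicit variant of the paper's reduction of that case to its first case via the boundary point $s_r$ with $a_r+b_r=4$, where these two facts are left implicit.
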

\begin{proof}
Suppose for a contradiction that there are $q\in [0,1]$, $s, s'\in\R_{\geq 0}^3 \setminus \rep $ with $q\cdot s+(1-q)\cdot s'\in\rep$. Let $s = (a, b, c)$, $s' = (a', b', c')$ and set $(a_q, b_q, c_q) := q \cdot s + (1-q) \cdot s'$.
We consider three cases, depending on the values of $a, b$ and $a', b'$.
%\begin{enumerate}[leftmargin=0cm,itemindent=.5cm,labelwidth=\itemindent,labelsep=0cm,align=left]
%\begin{enumerate}[leftmargin=\labelwidth-\labelsep]
\begin{enumerate}
%\begin{enumerate}[leftmargin=.5cm]
	\item\label{item:leqleq} \underline{$a + b \leq 4$ and $a' + b' \leq 4$.} Consider the two triples $(a, b, \gamma)$ and $(a', b', \gamma')$, where $\gamma = f(a,b)$ and $\gamma' = f(a',b')$. By Lemma~\ref{lem:ess}, we have $(a, b, \gamma)$, $(a', b', \gamma') \in \rep$, and since $s, s' \notin \rep$, we have $c > \gamma$ and $c' > \gamma'$, which implies that $q \cdot c + (1-q) \cdot c' > q \cdot \gamma + (1-q) \cdot \gamma'$. By Lemma~\ref{lem:convex}, we know that $q \cdot \gamma + (1-q) \cdot \gamma' \geq f(q \cdot a + (1-q) \cdot a', q \cdot b + (1-q) \cdot b')$, hence we have $q \cdot c + (1-q) \cdot c' > f(q \cdot a + (1-q) \cdot a', q \cdot b + (1-q) \cdot b')$. In other words, if we set $(a_q, b_q, c_q) := q \cdot s + (1-q) \cdot s'$, then $c_q > f(a_q, b_q)$, which implies $(a_q, b_q, c_q) \notin \rep$, by Lemma~\ref{lem:ess}, and we obtain a contradiction.
	\item \underline{$a + b > 4$ and $a' + b' > 4$.} In this case, we see that, for any $q \in [0,1]$, we have $a_q + b_q > 4$, which implies $(a_q, b_q, c_q) \notin \rep$, by Lemma~\ref{lem:ess}, and, again, we obtain a contradiction.
	\item \underline{$a + b \leq 4$ and $a' + b' > 4$ or vice versa.} W.l.o.g., we can assume that $a + b \leq 4$ and $a' + b' > 4$. Let $r \in [0,1]$ be the uniquely defined value such that, for the triple $s_r := (a_r, b_r, c_r) := r \cdot s + (1-r) \cdot s'$, we have $a_r + b_r = 4$. Observe that if $q \leq r$, we can write $(a_q, b_q, c_q)$ as $q' \cdot s + (1-q') \cdot s_r$, for some $q' \in [0,1]$, by setting $q' := 1 - (1-q)/(1-r)$ (or $q' := 0$ if $r=1$). But then we obtain a contradiction by applying the argumentation in Case~\ref{item:leqleq} for the triples $s$ and $s_r$. Hence, we know that $q>r$. Similarly to before, we can write $(a_q, b_q, c_q)$ as $q' \cdot s_r + (1-q') \cdot s'$, for some $q' \in [0,1)$. In particular, we have $q' \neq 1$ which together with $a_r + b_r = 4$ and $a' + b' > 4$ implies that $a_q + b_q = q' \cdot (a_r + b_r) + (1-q') \cdot (a' + b') > 4$. By Lemma~\ref{lem:ess}, $(a_q, b_q, c_q) \notin \rep$, yielding a contradiction. \qedhere
\end{enumerate}
\end{proof}

\subsection{Proof of the Variable Fixing Lemma (\Cref{lem:main})}
\label{sec:mainP}
Consider the random variable $X$ we are about to fix in \Cref{lem:main}, and assume that it affects the three events located at the nodes $u, v, w$ which then are connected in the dependency graph by the edges $e = \{u, v\}$, $e' = \{u, w\}$, and $e'' = \{v, w\}$.
Recall that our input graph $G$ together with the function $\varphi$ satisfies $P^*$.
Let the discrete distribution of $X$ be given by the set $\{ y_1, \dots, y_k \}$ of possible values for $X$ and corresponding positive probabilities $p_1, \dots, p_k$ with which those values occur.
For each possible value $y$ of $X$ we are interested in the increases (or decreases) of the probabilities of our bad events due to fixing $X=y$. 
For each value $y$ and each $t \in \{u,v,w\}$, define 
\[
	\inc(t, y) := \frac{\pr[\badev_t \mid X_1 = x_1, \dots, X_z = x_z, X=y]}{\pr[\badev_t \mid X_1 = x_1, \dots, X_z = x_z]} \enspace.
\]
If $\pr[\badev_t \mid X_1 = x_1, \dots, X_z = x_z] = 0$, set $\inc(v, y) = 0$.

Now if we can find a value $y$ for $X$ and values $\psi_e^u, \psi_e^v, \psi_{e'}^u, \psi_{e'}^w$, $\psi_{e''}^v, \psi_{e''}^w \in [0,2]$ with $\psi_e^u + \psi_e^v \leq 2$, $\psi_{e'}^u + \psi_{e'}^w \leq 2$, $\psi_{e''}^v + \psi_{e''}^w \leq 2$, such that
\begin{align*}
%	\frac{\psi_e^u \psi_{e'}^u}{\varphi_e^u \varphi_{e'}^u} &\geq \inc(u, y) \enspace, \\
%	\frac{\psi_e^v \psi_{e''}^v}{\varphi_e^v \varphi_{e''}^v} &\geq \inc(v, y) \enspace, \textrm{ and} \\
%	\frac{\psi_{e'}^w \psi_{e''}^w}{\varphi_{e'}^w \varphi_{e''}^w} &\geq \inc(w, y) \enspace,
	\psi_e^u \psi_{e'}^u &\geq \inc(u, y) \cdot \varphi_e^u \varphi_{e'}^u \enspace, \\
	\psi_e^v \psi_{e''}^v &\geq \inc(v, y) \cdot \varphi_e^v \varphi_{e''}^v \enspace, \textrm{ and} \\
	\psi_{e'}^w \psi_{e''}^w &\geq \inc(w, y) \cdot \varphi_{e'}^w \varphi_{e''}^w \enspace,
\end{align*}
then we can prove Lemma~\ref{lem:main} by identifying $\psi$ and $\varphi$ on $\{(e''',v) \in E \times V \mid v \in e''', e''' \notin \{ e, e', e'' \} \}$.
In order to find such a value $y$, we will need the definition of an evil value.
%, i.e., a value of a random variable for which no representable triple $(a',b',c')$ as required in Lemma~\ref{lem:reprep} exists.
%\ym{Does not yet go conform with notation above}
%\ym{Also add a teaser that the \incurved property of $\rep$ will be cool}
\begin{definition}[evil value]\label{def:evva}
Let $\rep \subseteq \R_{\geq 0}^3$ denote the set of all representable triples, and fix some $(a,b,c) \in \rep$.
We call a value $y$ of a rank $3$ random variable $X$ \emph{$(a,b,c)$-evil} if there exists no $(a',b',c') \in \rep$ such that
	\begin{align*}
		a' &\geq \inc(u, y) \cdot a \enspace, \\
		b' &\geq \inc(v, y) \cdot b \enspace, \textrm{ and} \\
		c' &\geq \inc(w, y) \cdot c \enspace.
	\end{align*}
\end{definition}	
\noindent In particular, if $y$ is $(a,b,c)$-evil, then the triple $(\inc(u, y) \cdot a, \inc(v, y) \cdot b, \inc(w, y) \cdot c)$ is not contained in $\rep$.
%We call a value $y$ of a random variable $X$ \emph{$(a,b,c)$-evil} if for all $(a',b',c') \in \rep$ at least one of the following holds:
%\begin{enumerate}
%	\item $a' < \inc(u, y) \cdot a$
%	\item $b' < \inc(v, y) \cdot b$
%	\item $c' < \inc(w, y) \cdot c$
%\end{enumerate}
%
%
We prove the following.
\begin{lemma}\label{lem:bad}
If there exist a rank $3$ random variable $X$ and a triple $(a,b,c) \in \rep$ such that all possible values $y$ for $X$ are $(a,b,c)$-evil, then $\rep$ is not \incurved.
	%If there exist a rank-$3$ random variable $X$ and a triple $(a,b,c) \in \rep$ such that all possible values $y$ for $X$ are $(a,b,c)$-evil, then there exist two triples $s, s' \in \R_{\geq 0}^3 \setminus \rep$ and some $q \in [0,1]$ such that the triple $q \cdot s + (1-q) \cdot s'$ is representable, i.e., $q \cdot s + (1-q) \cdot s' \in \rep$.
\end{lemma}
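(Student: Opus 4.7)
The plan is to assume that every value of $X$ is $(a,b,c)$-evil and exhibit a convex combination of points outside $\rep$ that lies inside $\rep$; an induction upgrading the two-point definition of \incurved{}ness to many points will then deliver the desired contradiction.

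First I would unpack evilness. Let $y_1,\dots,y_k$ be the possible values of $X$ with probabilities $p_1,\dots,p_k$, and for each $i$ set
\[
s_i \;:=\; \bigl(\inc(u,y_i)\cdot a,\ \inc(v,y_i)\cdot b,\ \inc(w,y_i)\cdot c\bigr)\ \in\ \R_{\geq 0}^3.
\]
The key structural fact I would verify next is that $\rep$ is \emph{downward closed}: given a representation of $(a,b,c)\in\rep$ via $a_1,a_2,b_1,b_3,c_2,c_3\in[0,2]$, we may scale $a_1,b_1,c_3$ (say) down by arbitrary factors in $[0,1]$ without violating any of the three sum constraints, producing representations of all componentwise smaller triples. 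Hence if any triple in $\rep$ dominates $s_i$ componentwise, then $s_i$ itself already lies in $\rep$; so $y_i$ is $(a,b,c)$-evil iff $s_i\notin\rep$.

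The second ingredient is a standard law-of-total-probability computation. Since the LLL random variables are mutually independent and $X$ has not yet been fixed, conditioning on the prior assignments gives
\[
\sum_{i=1}^k p_i\cdot\inc(t,y_i)\ \leq\ 1\qquad\text{for each }t\in\{u,v,w\},
\]
with equality when $\pr[\badev_t\mid X_1=x_1,\dots,X_z=x_z]>0$ and both sides $0$ otherwise. Consequently $\bar s:=\sum_i p_i s_i$ is componentwise bounded by $(a,b,c)\in\rep$, so downward closure yields $\bar s\in\rep$.

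To conclude, I would upgrade the two-point definition of \incurved{}ness to arbitrary convex combinations by induction on $k$: the base case $k=2$ is exactly the contrapositive of the definition, and for the step one writes
\[
\sum_{i=1}^k p_i s_i \;=\; q\cdot s' \;+\; p_k\cdot s_k,\qquad q:=\sum_{i<k}p_i,\quad s':=q^{-1}\sum_{i<k}p_i s_i,
\]
applies the induction hypothesis to $s_1,\dots,s_{k-1}$ to obtain $s'\notin\rep$, and then invokes the $k=2$ case (handling $q\in\{0,1\}$ as trivial edge cases). Instantiated with our $s_1,\dots,s_k\notin\rep$, this contradicts $\bar s\in\rep$, forcing $\rep$ not to be \incurved{}. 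The only step requiring any care is this induction, but it is routine; the real content of the lemma sits in the downward-closure observation and the expectation bound, both of which are immediate.
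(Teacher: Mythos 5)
Your proposal is correct and follows essentially the same route as the paper: you form the triples $s_i$, use the total-probability identity to place their convex combination in $\rep$, and reduce to the two-point definition of \incurved{}ness, with your routine induction playing the role of the paper's explicit ``first partial average $t_{j^*}$ that lands in $\rep$'' construction. Your downward-closure observation is a small extra ingredient (the paper gets $\sum_i p_i s_i=(a,b,c)$ exactly), but it harmlessly—and in fact more carefully—covers the degenerate case $\pr[\badev_t\mid X_1=x_1,\dots,X_z=x_z]=0$.
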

\begin{proof}
	Let $X$ and $(a,b,c)$ be as described above.
	Let $\{y_1, \dots, y_k\}$ be the set of possible values for $X$, occurring with positive probabilities $p_1, \dots, p_k$.
	Let $\badev_u, \badev_v, \badev_w$ denote the bad events affected by $X$, and let $u, v, w$ denote the associated nodes in the dependency graph, respectively.
	For $x \in \{ u, v, w \}$, we denote by $\pr[\badev_x]$ the probability of bad event $\badev_x$ to occur conditioned on the values the already fixed random variables have been assigned.
	By the definition of $\inc(\cdot, \cdot)$, we have, for any $x \in \{ u, v, w \}$,
	\[
		\sum_{i = 1}^k \left( p_i \cdot \inc(x, y_i) \right) = \sum_{i = 1}^k \frac{p_i \cdot \pr[\badev_x \mid X = y_i]}{\pr[\badev_x]} = \frac{\pr[\badev_x]}{\pr[\badev_x]} = 1 \enspace.
	\]
	For $1 \leq i \leq k$, consider the triples $s_i \in \R_{\geq 0}^3$ defined by $s_i := (\inc(u, y_i) \cdot a, \inc(v, y_i) \cdot b, \inc(w, y_i) \cdot c)$.
	Since $y_i$ is $(a,b,c)$-evil for each $1 \leq i \leq k$, we see that no $s_i$ is contained in $\rep$.
	Moreover, by our above observations, we have
	\begin{align*}
		\sum_{i = 1}^k \left( p_i \cdot s_i \right) & = \Bigg( \sum_{i = 1}^k \left( p_i \cdot \inc(u, y_i) \cdot a \right), \sum_{i = 1}^k \left( p_i \cdot \inc(v, y_i) \cdot b \right), \\
													& \phantom{++++} \sum_{i = 1}^k \left( p_i \cdot \inc(w, y_i) \cdot c \right) \Bigg) = (a,b,c) \enspace.
	\end{align*}
	Hence, $\sum_{i = 1}^k \left( p_i \cdot s_i \right) \in \rep$. Consider the sequence $(t_1, \dots, t_k)$ of triples defined by $t_1 := s_1$ and
	\[
		t_j := \frac{\sum_{i=1}^{j-1} p_i}{\sum_{i=1}^{j} p_i} \cdot t_{j-1} + \frac{p_j}{\sum_{i=1}^{j} p_i} \cdot s_j \enspace,
	\]
	for all $2 \leq j \leq k$.
	Let $j^*$ be the smallest index such that $t_{j^*} \in \rep$.
	A straighforward induction shows that
	\[
		t_j =  \sum_{i=1}^j \left( \frac{p_i}{\sum_{i' = 1}^j p_{i'}} \cdot s_i \right) \enspace.
	\]
	Hence, we know that such a $j^*$ exists since $t_k = \sum_{i=1}^k \left( p_i \cdot s_i \right) \in \rep$ as shown above.
	Moreover, since $t_1 = s_1 \notin \rep$, we see that $j^* \geq 2$.
	Thus, due to the choice of $j^*$, by setting $s := t_{j^*-1} \notin \rep$, $s' := s_{j^*} \notin \rep$, and
	\[
		q := \frac{\sum_{i=1}^{j^*-1} p_i}{\sum_{i=1}^{j^*} p_i} \enspace,
	\]
	we obtain the desired $s, s', q$ since $q \cdot s + (1-q) \cdot s' = t_{j^*}\in \rep$.
	Note that all $t_j$ are elements of $\R_{\geq 0}^3$ as they are weighted averages of elements in $\R_{\geq 0}^3$, by the characterization of the $t_j$ given above.
\end{proof}

%We will show that it is indeed possible to find such values.
%Our goal is to show the following lemma.
%\begin{lemma}\label{lem:reprep}
%	For any representable triple $(a,b,c)$ and any rank-$3$ random variable $X$, there exists a value $y$ for $X$ and a representable triple $(a', b', c')$ such that
%	\begin{align*}
%%		\frac{a'}{a} &\geq \inc(u, y) \enspace, \\
%%		\frac{b'}{b} &\geq \inc(v, y) \enspace, \textrm{ and} \\
%%		\frac{c'}{c} &\geq \inc(w, y) \enspace.
%		a' &\geq \inc(u, y) \cdot a \enspace, \\
%		b' &\geq \inc(v, y) \cdot b \enspace, \textrm{ and} \\
%		c' &\geq \inc(w, y) \cdot c \enspace,
%	\end{align*}
%	 where $u, v, w$ denote the nodes contained in the hyperedge associated with $X$.
%\end{lemma}
%\begin{proof}[Proof of \Cref{lem:reprep}]
%There has to be at least one triple $(a,b,c)$ that is not evil as, if all triples were evil  \Cref{lem:bad} implied that $\rep$ is \incurved. However, in  \Cref{lem:incurved} we have shown that $\rep$ is \incurved which would be a contradiction. \ym{add details and context to the proof}
%\end{proof}

\begin{proof}[Proof of \Cref{lem:main}]
Recall our setting and observation immediately before \Cref{def:evva}.
By combining \Cref{lem:incurved} and \Cref{lem:bad}, we see that for the random variable $X$ and the representable triple $(a, b, c) := (\varphi_e^u \varphi_{e'}^u, \varphi_e^v \varphi_{e''}^v, \varphi_{e'}^w \varphi_{e''}^w)$, there exist a value $y$ for $X$ and a representable triple $(a', b', c')$ such that
	\begin{align*}
		a' &\geq \inc(u, y) \cdot a \enspace, ~ b' \geq \inc(v, y) \cdot b \enspace, \textrm{ and}\\
		c' &\geq \inc(w, y) \cdot c \enspace.
	\end{align*}
Since $(a', b', c')$ is representable, there exist values $\psi_e^u, \psi_e^v, \psi_{e'}^u, \psi_{e'}^w,$ $\psi_{e''}^v, \psi_{e''}^w \in [0,2]$ with $\psi_e^u + \psi_e^v \leq 2$, $\psi_{e'}^u + \psi_{e'}^w \leq 2$, $\psi_{e''}^v + \psi_{e''}^w \leq 2$ and $a' = \psi_e^u \psi_{e'}^u$, $b' = \psi_e^v \psi_{e''}^v$, $c' = \psi_{e'}^w \psi_{e''}^w$.
Hence,
\begin{align*}
%	\frac{\psi_e^u \psi_{e'}^u}{\varphi_e^u \varphi_{e'}^u} &\geq \inc(u, y) \enspace, \\
%	\frac{\psi_e^v \psi_{e''}^v}{\varphi_e^v \varphi_{e''}^v} &\geq \inc(v, y) \enspace, \textrm{ and} \\
%	\frac{\psi_{e'}^w \psi_{e''}^w}{\varphi_{e'}^w \varphi_{e''}^w} &\geq \inc(w, y) \enspace,
	\psi_e^u \psi_{e'}^u &\geq \inc(u, y) \cdot \varphi_e^u \varphi_{e'}^u \enspace, \\
	\psi_e^v \psi_{e''}^v &\geq \inc(v, y) \cdot \varphi_e^v \varphi_{e''}^v \enspace, \textrm{ and} \\
	\psi_{e'}^w \psi_{e''}^w &\geq \inc(w, y) \cdot \varphi_{e'}^w \varphi_{e''}^w \enspace,
\end{align*}
which implies that, for the function $\psi$ completed by identifying $\psi$ and $\varphi$ on $\{(e''',v) \in E \times V \mid v \in e''', e''' \notin \{ e, e', e'' \} \}$, the pair $(G, \psi)$ satisfies property $P^*$.
Lemma~\ref{lem:main} follows.
\end{proof}

\subsection{Proof of the Main Corollary}
\label{sec:mainCor}
\comadi*
\begin{proof}
We begin by finding a $2$-hop vertex-coloring of the dependency graph with $O(d^2)$ colors.
This can be done in $\widetilde{O}(d + \log^* n)$ time~\cite{fraigniaud16}.
%This can be done in $\widetilde{O}(d + \log^* n) \ll O(d^2 + \log^* n)$ time~\cite{fraigniaud16}.
Then, we iterate through the color classes and every node of a color class fixes, one by one but in a single communication round, \emph{all} of its variables that are not fixed yet.
We make two observations.
First, since we are iterating through a $2$-hop-coloring, no two nodes ever fix variables that share an event.
Put otherwise, variables incident on events within at least $3$ hops from each other cannot share an event.
Second, we are effectively fixing the random variables according to \emph{some} order on the variables.
Since \Cref{thm:mainTechnicalLLL} works for \emph{any} order and due to the local nature of the fixing and bookkeeping in the proof of \Cref{thm:mainTechnicalLLL}, our process correctly solves the LLL instance.
Putting the above together, we obtain an $O(d^2 + \log^* n)$ time algorithm for the LLL problem. \qedhere

% The algorithm from \Cref{thm:mainTechnicalLLL} can be run in parallel in the \LOCAL model, given an edge coloring of the hypergraph $H$ that is associated with the LLL instance.
% More precisely, we first find a $O(\gdeg^2)$-hyperedge coloring of $H$.
% Then, we iterate through the color classes, where for each color class we fix, in parallel, a value for each random variable corresponding to a hyperedge of that color.

% This value is final.

\end{proof}

%\subsection{Lower bound for rank=3}
%I think we have no time for this section
%First we might want to show that we can make the sinkless orientation lower bound apply even in the case that all hyperedges are actually of rank $3$ (but if it doesn't work, we don't have to).
%Then we go on to the upper bound proof assuming that $p < 2^{-\gdeg}$.

%\section{Discussion}
%Our result can be seen as a (strong) indicator that this is actually the case for all ranks, but this remains to be proved.
%We develop some techniques that might help in this regard. (Find nice catchy names for our techniques!)

%Also it might be worth to point out somewhere that all the lemmas (or which lemmas?) also work for higher rank (if the definition of a (representable) triple is generalized to a quadruple, quintuple, etc.).

%Explain why it is hard to generalize our approach to higher ranks, and why there is hope to generalize our approach to higher ranks?
%Also state our conjecture explicitly.
%And not only our conjecture about for which LLL criteria we have $O(\log^* n)$-time solvability and for which not, but also the conjecture about how to prove it / which of our lemmas still hold in the higher-rank case.

%Can we experimentally verify things for rank $4$ and claim that this is a good reason that things are true?

\section*{Acknowledgements}
We  thank Mohsen Ghaffari for bringing the considered problem to our attention, together with providing the solution for rank $2$.

%
% The next two lines define the bibliography style to be used, and the bibliography file.
\bibliographystyle{alpha}
\bibliography{../references}
\clearpage

\appendix
\section{Deferred Proofs}
\begin{proof}[Proof of \Cref{lem:ess}]
Let $\nrep :=\{ (a, b, c) \subseteq \R_{\geq 0}^3 \mid a + b \leq 4, c \leq f(a,b) \}$
We first show that any representable $(a,b,c)$ is in $\nrep$. If $(a,b,c)$ is representable there are $a_1,a_2,b_1,b_3,c_2,c_3\in [0,2]$ with $a_1+b_1\leq 2, a_2+c_2\leq 2, b_3+c_3\leq 2, a=a_1a_2, b=b_1b_3$ and $c=c_2c_3$. 
Thus we obtain $a+b=a_1a_2+b_1b_3\leq 2a_1+2b_1\leq 2a_1+2(2-a_1)=4$~. We now show that the maximal $c$ for which $(a,b,c)$ is representable for given $a,b\in [0,4]$ with $a+b\leq 4$ equals $f(a,b)$.  Note that this implies that any $0\leq c\leq f(a,b)$ is also representable.
We consider several cases for $a$ and $b$ where the only non trivial case is $a,b\neq 0$. 
\begin{itemize}
\item Case $a=b=0$: We can choose $a_1=a_2=b_2=b_3=0$ and $c_2=c_3=2$, i.e, $(a,b,4)$ is representable and we also have $f(0,0)=4$.
\item Case $a=0,b\neq 0$: We can choose $a_1=a_2=0$ and $b_1=c_2=2$. Then we obtain $c\leq c_2c_3=2c_3\leq 2(2-b_3)=2(2-b/2)=4-b$ and we also have $f(0,b)=4-b$.
\item Case $a\neq0, b=0$: The proof of this case is analogous to the case $a=0, b\neq 0$.
\item Case $a,b\neq 0$. First note that $a,b\neq 0$ implies $a,b\neq 4$ because $a+b\leq 4$. For fixed $a,b\notin \{0,4\}$ we vary the value of $a_1$ to see which values of $c$ can be represented. We denote this varying value of $a_1$ by $x$. 
As $2\geq a_2=a/x$ we obtain $x\geq a/2$ and due to $2\geq b_3=b/(2-x)$ we obtain $x\geq 2-b/2$. 
 Thus we obtain that $c\leq c_2c_3\leq (2-a_2)(2-b_3)\leq \left(2-\frac{a}{x}\right)\left(2-\frac{b}{2-x}\right)=:c(x)$ and $a/2\leq x\leq 2-b/2$~.
To find the maximal value of $c$ we compute the derivative $\frac{d}{dx} c(x)$ as
\begin{align*}
 \frac{d}{dx} c(x)=\frac{2 ((a-b)x^2- a(4-b)x + a(4-b))}{(x-2)^2 x^2}~.
\end{align*}
Thus we have $\frac{d}{dx} c(x)=0$ if and only if $(a-b)x^2+ (a b- 4 a )x + 4 a - a b=0$.

\textit{Case $a=b\neq 0$:} The equality is satisfied for $x=1$, that is,  $c\leq c(1)=(2-a)^2=f(a,a)$~.

\textit{Case $a\neq b, a,b\neq 0$:} Let $p=\frac{a(4-b)}{(a-b)}$. Then we get $\frac{d}{dx} c(x)=0$ if and only if $x^2-px+p=0$ which is satisfied for 
\begin{align*}
x_{1,2}=  \frac{p}{2}\pm\sqrt{\frac{p^2-4p}{4}}=\frac{a(4-b)\pm \sqrt{ab(4-a)(4-b)}}{2(a-b)}~.
\end{align*}
The value $x_2=\frac{p}{2}+\sqrt{\frac{p^2-4p}{4}}$ is outside of the range $[a/2,2-b/2]$ as for $b>a$ it is negative and for $a>b>0$ it is larger than $2-b/2$. 
Plugging  $x_1=\frac{p}{2}-\sqrt{\frac{p^2-4p}{4}}$ into $c(\cdot)$ yields
%\[x_1=\frac{p}{2}-\sqrt{\frac{p^2-4p}{4}}=\frac{a(4-b)- \sqrt{ab(4-a)(4-b)}}{2(a-b)}\]
%\frac{a(4- b) + \sqrt{a^2(4-b)^2 + 4 (b-a)a (4- b)}}{2 (a - b)}  $ 
%into $c(\cdot)$ yields
\begin{align*}
c(x_1) & =\left(2-\frac{a}{x_1}\right)\left(2-\frac{b}{2-x_1}\right) \\
% & =\left(2-\frac{2 a (a-b)}{a (4 - b) - \sqrt{ab(4 -a)(4- b)}}\right)\left(2-\frac{2 b (a-b)}{-(4-a) b + \sqrt{ab(4-a)(4 - b)}}\right) \\
& =\left(2-\frac{2 a (a-b)}{a (4 - b) - \sqrt{ab(4 -a)(4- b)}}\right) \\ 
& \phantom{++++} \cdot \left(2-\frac{2 b (a-b)}{-(4-a) b + \sqrt{ab(4-a)(4 - b)}}\right) \\
% & =4\cdot \frac{a(4-a)-\sqrt{ab(4 -a)(4- b)}}{a(4-b)-\sqrt{ab(4 -a)(4- b)}}\cdot \frac{b(4-b)-\sqrt{ab(4 -a)(4- b)}}{b(4-a)-\sqrt{ab(4 -a)(4- b)}} \\
& =4\cdot \frac{a(4-a)-\sqrt{ab(4 -a)(4- b)}}{a(4-b)-\sqrt{ab(4 -a)(4- b)}} \\
& \phantom{++++} \cdot \frac{b(4-b)-\sqrt{ab(4 -a)(4- b)}}{b(4-a)-\sqrt{ab(4 -a)(4- b)}}
\intertext{Factoring $\sqrt{ab(4 -a)(4- b)}\neq 0$ out of the four terms yields}
& 4\cdot \frac{\frac{\sqrt{a(4-a)}}{\sqrt{b(4-b)}}-1}{\frac{\sqrt{a(4-b)}}{\sqrt{b(4-a)}}-1} \cdot \frac{\frac{\sqrt{b(4-b)}}{\sqrt{a(4-a)}}-1}{\frac{\sqrt{b(4-a)}}{\sqrt{a(4-b)}}-1} \\
 & = 4\cdot\frac{\frac{\sqrt{a(4-a)}-\sqrt{b(4-b)}}{\sqrt{b(4-b)}}}{\frac{\sqrt{a(4-b)}-\sqrt{b(4-a)}}{\sqrt{b(4-a)}}}\cdot \frac{\frac{\sqrt{b(4-b)}-\sqrt{a(4-a)}}{\sqrt{a(4-a)}}}{\frac{\sqrt{b(4-a)}-\sqrt{a(4-b)}}{\sqrt{a(4-b)}}}\\
 & =4\cdot \left(\frac{\sqrt{a(4-a)}-\sqrt{b(4-b)}}{\sqrt{a(4-b)}-\sqrt{b(4-a)}}\right)^2
\intertext{and multiplying nominator and denominator with $\sqrt{a(4-b)}+\sqrt{b(4-a)}\neq 0$ yields}
& 4\cdot \left(\frac{(a-b)\sqrt{(4-a)(4-b)}-(a-b)\sqrt{ab}}{4(a-b)}\right)^2 \\
  & = \left(\frac{\sqrt{(4-a)(4-b)}-\sqrt{ab}}{2}\right)^2 \\
%& = \frac{(4-a)(4-b)-2\sqrt{ab(4-a)(4-b)}+ab}{4} \\
 %= \frac{16-4(a+b)+2ab-2\sqrt{(4-a)(4-b)ab}}{4}\\
& =4 + \frac{1}{2} \cdot (ab - 2a - 2b - \sqrt{ab(4-a)(4-b)}) \ .
\end{align*}
As $x\in[a/2,2-b/2]$ and we have $c(a/2)=c(2-b/2)=0$ the maximum possible value for $c$ for a given $a,b\in [0,4]$ with $a,b\neq 0$ and $a+b\leq 4$ is  $c(x_1)=f(a,b)$. This concludes the case $a,b\neq 0$. 
\end{itemize}
Summarizing all cases we have shown that any representable $(a,b,c)$ is contained in the set $\nrep$. For the converse direction let $(a,b,c)\in \nrep$, that is, $a,b,c\in [0,4]$, $a+b\leq 4$ and $c\leq f(a,b)$. The previous proof has shown that $(a,b,f(a,b))$ is representable by letting $a_1=x_1, a_2=a/a_1, b_1=2-a_1, b_3=b/b_2, c_2=2-a_2$ and $c_3=2-b_3$. As $c\leq f(a,b)$ we can also represent $(a,b,c)$ by decreasing $c_2$ (or $c_3$) without violating the constraints.
\end{proof}

\begin{proof}[Proof of \Cref{lem:convex}]
	Set $U := \{ (a,b) \in \R_{\geq 0}^2 \mid a + b \leq 4 \}$ and $U' := \{ (a,b) \in \R_{> 0}^2 \mid a + b < 4 \}$.
	Since $f(a,b)$ is continuous on $U$, it is sufficient to show convexity of $f(a,b)$ on the open domain $U'$, which in turn can be showed by proving that at every point $x\in U'$, the Hessian $\nabla^2 f(x)$ of $f$ is positive semi-definite (cf., e.g., \cite[Section 3.1.4]{boyd_vandenberghe_2004}).
	We will prove the slightly stronger statement that $\nabla^2 f(x)$ is positive definite, which, by Sylvester's criterion, is equivalent to the statement that all leading principal minors of $\nabla^2 f(x)$ are positive.
	In other words, we will prove our lemma by showing that
	$\frac{\partial^2 f(x)}{\partial a^2} >0$ and
$		\frac{\partial^2 f(x)}{\partial a^2} \cdot \frac{\partial^2 f(x)}{\partial b^2} - \frac{\partial^2 f(x)}{\partial a \partial b} \cdot \frac{\partial^2 f(x)}{\partial a \partial b}> 0$
	for all $x \in U'$.
	To this end, we first calculate the four involved terms for $x=(a,b) \in U'$.
	We have
	\begin{align*}
		\frac{\partial f(a,b)}{\partial a} & = \frac{1}{2} \cdot \left( b - 2 - \frac{b(4-b)(4-2a)}{2 \sqrt{ab(4-a)(4-b)}} \right) \enspace,
		%& = 
	\end{align*}
	which implies
	\begin{align*}
% 		\frac{\partial^2 f(a,b)}{\partial a^2} & = \frac{1}{2} \cdot - \frac{ -4b(4-b)\sqrt{ab(4-a)(4-b)} - \frac{2b(4-b)(4-2a)b(4-b)(4-2a)}{2\sqrt{ab(4-a)(4-b)}}}{4 ab(4-a)(4-b)} \\
 		\frac{\partial^2 f(a,b)}{\partial a^2} & = \frac{1}{2} \cdot - \frac{ -4b(4-b)\sqrt{ab(4-a)(4-b)}}{4 ab(4-a)(4-b)} \\
		& \phantom{++++} - \frac{1}{2} \cdot \frac{\frac{2b(4-b)(4-2a)b(4-b)(4-2a)}{2\sqrt{ab(4-a)(4-b)}}}{4 ab(4-a)(4-b)} \\
		& = \frac{1}{2} \cdot \left( \sqrt{\frac{b(4-b)}{a(4-a)}} + \frac{(2-a)^2}{a(4-a)} \cdot \sqrt{\frac{b(4-b)}{a(4-a)}} \right) \\
		& = \frac{2}{a(4-a)} \cdot \sqrt{\frac{b(4-b)}{a(4-a)}}
	\end{align*}
% 	and
	\begin{align*}
%		\frac{\partial f(a,b)}{\partial a \partial b} & = \frac{1}{2} \cdot \left( 1 - \frac{ 2(4-2b)(4-2a)\sqrt{ab(4-a)(4-b)} - \frac{2b(4-b)(4-2a)a(4-a)(4-2b)}{2\sqrt{ab(4-a)(4-b)}}}{4 ab(4-a)(4-b)} \right) \\
		\frac{\partial f(a,b)}{\partial a \partial b} & = \frac{1}{2} \cdot \Bigg( 1 - \frac{ 2(4-2b)(4-2a)\sqrt{ab(4-a)(4-b)}}{4 ab(4-a)(4-b)} \\
		& \phantom{++++} - \frac{\frac{2b(4-b)(4-2a)a(4-a)(4-2b)}{2\sqrt{ab(4-a)(4-b)}}}{4 ab(4-a)(4-b)} \Bigg) \\
		& = \frac{1}{2} \cdot \left( 1 - \frac{2(2-b)(2-a)}{\sqrt{ab(4-a)(4-b)}} + \frac{(2-a)(2-b)}{\sqrt{ab(4-a)(4-b)}} \right) \\
%		& = \frac{1}{2} \cdot \left( 1 - \frac{(2-a)(2-b)}{\sqrt{ab(4-a)(4-b)}} \right) \enspace.
		& = \frac{1}{2} - \frac{(2-a)(2-b)}{2\sqrt{ab(4-a)(4-b)}} \enspace.
	\end{align*}
	Since our function $f(a,b)$ is symmetric in $a$ and $b$, we also have
	\begin{align*}
		\frac{\partial^2 f(a,b)}{\partial b^2} & = \frac{2}{b(4-b)} \cdot \sqrt{\frac{a(4-a)}{b(4-b)}}\text{ and} \\
		\frac{\partial f(a,b)}{\partial b \partial a} & = \frac{1}{2} - \frac{(2-a)(2-b)}{2\sqrt{ab(4-a)(4-b)}} \enspace.
	\end{align*}
	Hence, we obtain
	\begin{align*}
		\frac{\partial^2 f(a,b)}{\partial a^2} & = \frac{2}{a(4-a)} \cdot \sqrt{\frac{b(4-b)}{a(4-a)}} > 0
	\end{align*}
	\begin{align*}
		& \phantom{=i} \frac{\partial^2 f(a,b)}{\partial a^2} \cdot \frac{\partial^2 f(a,b)}{\partial b^2} - \frac{\partial^2 f(a,b)}{\partial a \partial b} \cdot \frac{\partial^2 f(a,b)}{\partial a \partial b} \\
		& = \frac{2}{a(4-a)} \cdot \sqrt{\frac{b(4-b)}{a(4-a)}} \cdot \frac{2}{b(4-b)} \cdot \sqrt{\frac{a(4-a)}{b(4-b)}} \\
		& \phantom{++++} - \left( \frac{1}{2} - \frac{(2-a)(2-b)}{2\sqrt{ab(4-a)(4-b)}} \right)^2 \\
		& = \frac{4}{ab(4-a)(4-b)} - \frac{1}{4} + \frac{(2-a)(2-b)}{2\sqrt{ab(4-a)(4-b)}} \\
		& \phantom{++++} - \frac{(2-a)^2(2-b)^2}{4ab(4-a)(4-b)} \\
% 		& = \frac{16 - ab(4-a)(4-b) + 2(2-a)(2-b)\sqrt{ab(4-a)(4-b)} - (a^2-4a+4)(b^2-4b+4)}{4ab(4-a)(4-b)} \\
		& = \frac{16 - ab(4-a)(4-b) + 2(2-a)(2-b)\sqrt{ab(4-a)(4-b)}}{4ab(4-a)(4-b)} \\
		& \phantom{++++} - \frac{(a^2-4a+4)(b^2-4b+4)}{4ab(4-a)(4-b)} \\
		& = \frac{-2a^2b^2 + 8a^2b + 8ab^2 - 4a^2 - 4b^2 - 32ab + 16a}{4ab(4-a)(4-b)} + \\
		& \phantom{++++} \frac{16b + (2ab - 4a - 4b + 8)\sqrt{ab(4-a)(4-b)}}{4ab(4-a)(4-b)} \\
		& = \frac{16 - (4 + ab -2a -2b - \sqrt{ab(4-a)(4-b)})^2}{4ab(4-a)(4-b)} \\
%		& = \frac{16 - (2f(a,b) - 4)^2}{4ab(4-a)(4-b)} \\
		& = \frac{16 - \left(\frac{1}{2} \cdot \left( \sqrt{(4-a)(4-b)} - \sqrt{ab} \right)^2 - 4 \right)^2}{4ab(4-a)(4-b)} 
		> 0 \enspace,
	\end{align*}
	for all $(a,b) \in U'$.
	Here, the last inequality follows from the fact that, for all $(a,b) \in U'$, we have
	$
		0 < \left( \sqrt{(4-a)(4-b)} - \sqrt{ab} \right)^2 < 16
	$
	since $0 < a < 4 - b$, $0 < b < 4 - a$, and $(4-a)(4-b) < 16$.
	As argued above, it follows that $f(a,b) := 4 + 1/2 \cdot (ab - 2a - 2b - \sqrt{ab(4-a)(4-b)})$ is convex on $U$.
\end{proof}

\end{document}